\newtheorem{observation}[theorem]{Observation}
\spnewtheorem{claim}[theorem]{Claim}{\itshape}{\itshape}
\newcommand{\oset}[3][0ex]{%
  \mathrel{\mathop{#3}\limits^{
    \vbox to#1{\kern-2\ex@
    \hbox{$\scriptstyle#2$}\vss}}}}
\renewcommand{\vec}[1]{{\ensuremath{\mathbf{#1}}}}
\newcommand{\vecofvec}[1]{{\ensuremath{\oset[-0.2ex]{\rightharpoonup}{\mathbf{#1}}}}}
\newcommand{\vvc}{\vecofvec{c}}
\newcommand{\vx}{\vec{x}}
\newcommand{\vm}{\vec{m}}
\newcommand{\vd}{\vec{d}}
\newcommand{\vi}{\vec{i}}
\newcommand{\vo}{\vec{o}}
\renewcommand{\vb}{\vec{b}}
\newcommand{\vc}{\vec{c}}
\newcommand{\vv}{\vec{v}}
\newcommand{\vh}{\vec{h}}
\newcommand{\vI}{\vec{I}}
\newcommand{\vA}{\vec{A}}
\newcommand{\vM}{\vec{M}}
\newcommand{\ints}[1]{\{1,\ldots,#1\}}
\newcommand{\calM}{\mathcal{M}}
\newcommand{\amax}{\mathrm{amax}}
\newenvironment{subproof}[1][\proofname]{%
  \begin{proof}[#1]%
}{%
  \end{proof}%
}
\title{\vspace{-30pt}Thermodynamic Binding Networks 
}
\author{
David Doty\inst{1}\thanks{doty@ucdavis.edu. Supported by NSF grant CCF-1619343.}
\and 
Trent A.~Rogers\inst{2}\thanks{tar003@email.uark.edu Supported by the NSF Graduate Research Fellowship Program under Grant No. DGE-1450079, NSF Grant CAREER-1553166, and NSF Grant CCF-1422152.} 
\and 
David Soloveichik\inst{3}\thanks{david.soloveichik@utexas.edu. Supported by NSF grants CCF-1618895 and CCF-1652824.}
\and 
Chris Thachuk\inst{4}\thanks{thachuk@caltech.edu.  Supported by NSF grant CCF-1317694.} 
\and 
Damien Woods\inst{5}\thanks{damien.woods@inria.fr. Part of this work was carried out at California Institute of Technology. Supported by Inria (France) as well as National Science Foundation (USA) grants CCF-1219274, CCF-1162589, CCF-1317694.
} 
}
\date{}
\institute{%
\textsuperscript{1} University of California, Davis,   \textsuperscript{2}University of Arkansas,  \textsuperscript{3}University of Texas at Austin, 
\textsuperscript{3}California Institute of Technology, 
\textsuperscript{5}Inria
}
\begin{document}

\pagestyle{plain}

\maketitle

\begin{abstract}
    Strand displacement and tile assembly systems are designed to follow prescribed kinetic rules (i.e., exhibit a specific time-evolution). However, the expected behavior in the limit of infinite time---known as thermodynamic equilibrium---is often incompatible with the desired computation. Basic physical chemistry implicates this inconsistency as a source of unavoidable error. Can the thermodynamic equilibrium be made consistent with the desired computational pathway? In order to formally study this question, we introduce a new model of molecular computing in which computation is driven by the thermodynamic driving forces of enthalpy and entropy. 
    To ensure greatest generality we do not assume that there are any constraints imposed by geometry and treat monomers as unstructured collections of binding sites.
    In this model we design Boolean AND/OR formulas, as well as a self-assembling binary counter, where the thermodynamically favored states are exactly the desired final output configurations.
    Though inspired by DNA nanotechnology, the model is sufficiently general to apply to a wide variety of chemical systems.
\end{abstract}

\section{Introduction}
Most of the models of computing that have come to prominence in molecular programming are essentially kinetic.
For example, models of DNA strand displacement cascades and algorithmic tile assembly formalize desired interaction rules followed by certain chemical systems over time~\cite{phillips2009programming,winfree1998algorithmic}.
Basing molecular computation on kinetics is not surprising given that computation itself is ordinarily viewed as a \emph{process}.
However, unlike electronic computation, where thermodynamics holds little sway, 
chemical systems operate in a Brownian environment~\cite{bennett1982thermodynamics}.
If the desired output happens to be a meta-stable configuration, then thermodynamic driving forces will inexorably drive the system toward error.
For example, 
\emph{leak} in most strand displacement systems occurs because the thermodynamic equilibrium of a strand displacement cascade favors incorrect over the correct output, or does not discriminate between the two~\cite{thachuk2015leakless}.
In DNA tile assembly, we typically must find and exploit kinetic barriers to unseeded growth to enforce that growth happens only from seed assemblies, otherwise thermodynamically favored assemblies will quickly form that are not the intended self-assembly program execution from the seed/input~\cite{SchWin09,barish2009information}.

We introduce the Thermodynamic Binding Networks (TBN) model, where information processing is due entirely to the thermodynamic tradeoff between \emph{entropy} and \emph{enthalpy}, and not any particular reaction pathway.
In most experimental systems considered in DNA nanotechnology, thermodynamic favorability is determined by a tradeoff between:
(1) the number of base pairs formed or broken (all else being equal, a state with more base pairs bound is more favorable);
(2) the number of separate complexes (all else being equal, a state with more free complexes is more favorable). 
We use the terms enthalpy and entropy to describe (1) and (2) respectively (although this use does not perfectly align with their physical definitions, see Section~\ref{sec:model}). 
Intuitively, the entropic benefit of configurations with more separate complexes is due to additional microstates, each describing the independent three-dimensional positions of each complex.
Although the general case of a quantitative trade-off between enthalpy and entropy is complex, we develop an elegant formulation based on the limiting case in which enthalpy is infinitely more favorable than entropy.
Intuitively, this limit corresponds to increasing the strength of binding, while diluting (increasing the volume), such that the ratio of binding to unbinding rate goes to infinity.
Systems studied in molecular programming can in principle be engineered to arbitrarily approach this limit.
Indeed, this is the regime previously studied in the context of leak reduction for strand displacement cascades~\cite{thachuk2015leakless}.
Figure~\ref{fig:tbn-example} shows a simple TBN, which can exist in 9 possible binding configurations. 
The favored (stable) configuration is the one that, among the maximally bound ones (bottom row), maximizes the number of separate complexes (bottom right).

As a central choice in seeking a general theory, we \emph{dispense with geometry}: 
formally, we treat monomers simply as multisets of binding sites (domains).
Viewed in the context of strand displacement, this abstracts away secondary structure (the order of domains on a strand),
allowing us to represent arbitrary molecular arrangements such as pseudoknots~\cite{dirks2007thermodynamic},
and handle non-local error modes such as spurious remote toeholds~\cite{genot2011remote}.
In the context of tile self-assembly, we consider configurations in which binding does not follow the typical regular lattice structure.
Since the TBN model does not rely on geometric constraints to enforce correct behavior, showing that specific undesired behavior is prevented by enthalpy and entropy alone leads to a stronger guarantee.
Thus, for example proving leaklessness in this model would imply that even if pseudoknots, or other typically disallowed structures form, we would still have little leak.
Indeed, by casting aside the vagaries of DNA biophysics (e.g., persistence length, number of bases per turn, sequence dependence on binding strength, etc.), our aim is to develop a general theory of programmable systems based on molecular bonds, a theory that will apply to bonds based on other substrates such as proteins, base stacking, or electric charge.

After introducing the TBN model in Section~\ref{sec:model}, we give results on Boolean circuit-based and self-assembly-based computation. 
In Section~\ref{sec:boolean} we show how to construct AND and OR gates where the thermodynamically favored configurations encode the output.
We develop provable guarantees on the entropic penalty that must be overcome to produce an incorrect $1$ output,
showing how the logic gates can be designed to make the penalty arbitrarily large.
Although completely modular reasoning seems particularly tough in this model, we develop a proof technique based on logically excising domains to handle the composition of Boolean gates---specifically trees of AND gates. 
Further work is needed to generalize these results to arbitrary circuits.

In Section~\ref{sec:polymer-bounds} we look at self-assembly, beginning with questions about large assemblies.  
On the one hand we exhibit a class of TBNs with thermodynamicall stable assemblies (with simple `tree' connectivity) of size exponential in the number of constituent monomer types. On the other hand, we show that this bound is essentially tight by giving an exponential size upper bound on the size of stable assemblies in general.
These self-assembly results, along with the binary counter result below, tell us that  {\em monomer-efficient} self-assembly is indeed possible within this model, 
but that (somewhat surprisingly for a model that favors enthalpy infinitely over entropy)
super-exponential size polymers are necessarily unstable, 
even if they are self-assemblable in  kinetic-based models.

For clarity of thought in separating the computational power of thermodynamics and kinetics, throughout much of this paper we do not identify any particular kinetic pathway leading to the desired TBN stable state. Of course real-world physical systems do not operate at thermodynamic equilibrium, and might take longer than the lifetime of the universe to get there. 
Thus, for such `kinetically trapped' systems, encoding desired output in thermodynamic equilibrium is not enough by itself. 
\opt{arxiv_version}{To address this, in Section~\ref{sec:kineticexamples} 
we give a kinetically and thermodynamically favoured binary counter that assembles in both the abstract Tile Assembly Model and the TBN model.}
\opt{LNCS_version}{To address this, in the full version of this paper we give a kinetically and thermodynamically favoured binary counter that assembles in both the abstract Tile Assembly Model and the TBN model.}
Similarly, the strand displacement AND gate from ref.~\cite{thachuk2015leakless} can be shown to compute correctly in the TBN model~\cite{keenan}.
Nonetheless, more work is needed to come up with TBN schemes that have fast kinetic pathways, in addition to the provable thermodynamic guarantees.

\section{Model}\label{sec:model}

Let $\N,\Z,\Z^+$ denote the set of nonnegative integers, integers, and positive integers, respectively.
A key type of object in our definitions is a multiset, which we define in a few different ways as convenient. 
Let $\calA$ be a finite set.
We can define a multiset over $\calA$ using the standard set notion, e.g., $\vc = \{a, a, c\}$, where $a, c \in \calA$.
Formally, we view multiset $\vc$ as a vector assigning counts to $\calA$.
Letting $\N^\calA$ denote the set of functions $f:\calA\to\N$,
we have $\vc \in \N^\calA$. 
We index entries by elements of $a \in \calA$, calling $\vc(a) \in \N$ the \emph{count of $a$ in $\vc$}.
Fixing some arbitrary ordering on the elements of $\calA = \{a_1,a_2,\ldots,a_{k}\}$, we may equivalently view $\vc$ as an element of $\N^k$, where for $i \in \{1,2,\ldots,k\}$, $\vc(i)$ denotes $\vc(a_i)$.
Let $\|\vc\| = \|\vc\|_1 = \sum_{a\in \calA} \vc(a)$ denote the \emph{size} of $\vc$.
For any vector or matrix $\vc$, let $\amax(\vc)$ denote the largest absolute value of any component of $\vc$.

We model molecular bonds with precise binding specificity abstractly as binding ``domains'',
designed to bind only to other, specific binding domains.
Formally, consider a finite set $\calD$ of \emph{primary domain types}.
Each primary domain type $a \in \calD$ is mapped to a \emph{complementary domain type} (a.k.a., \emph{codomain type}) denoted $a^*$.
Let $\calD^* = \{a^* \mid a \in \calD\}$ denote the set of codomain types of $\calD$.
The mapping is assumed 1-1, so $|\calD^*| = |\calD|$.
We assume that a domain of primary type $a \in \calD$ binds only to its corresponding complementary type $a^* \in \calD^*$, and vice versa.\footnote{That is, we assume \emph{like-unlike} binding such as that found in DNA Watson-Crick base-pairing, as opposed to \emph{like-like} binding such as hydrophobic molecules with an affinity for each other in aqueous solution, or base stacking between the blunt ends of DNA helices\cite{sungwook_bonds_2011, dietz_bonds_science_2015}.
It is not clear the extent to which this choice affects the computational power of our model.
}
The set $\calD \cup \calD^*$ is the set of \emph{domain types}.

We assume a finite set $\calM$ of \emph{monomer types}, where a monomer type $\vm \in \N^{\calD \cup \calD^*}$ is a non-empty multiset of domain types, e.g., $\vm = \{a, b, b, c^*, a^*\}$ with  $a, b, c \in \calD$ being primary domain types. 
A  \emph{thermodynamic binding network} (TBN) is a pair $\calT = (\calD,\calM)$ consisting of a finite set $\calD$ of primary domain types and a finite set $\calM \subset \N^{\calD \cup \calD^*}$ of monomer types.
A \emph{monomer collection} $\vvc \in \N^\calM$ of $\calT$ is multiset of monomer types;
intuitively, $\vvc$ indicates how many of each monomer type from $\calM$ there are, but not how they are bound.\footnote{Because a monomer collection is a multiset of monomer types, each of which is itself a multiset,
we distinguish them typographically with an arrow.}

\begin{figure}[t]
\centering
\includegraphics[width=0.6\textwidth]{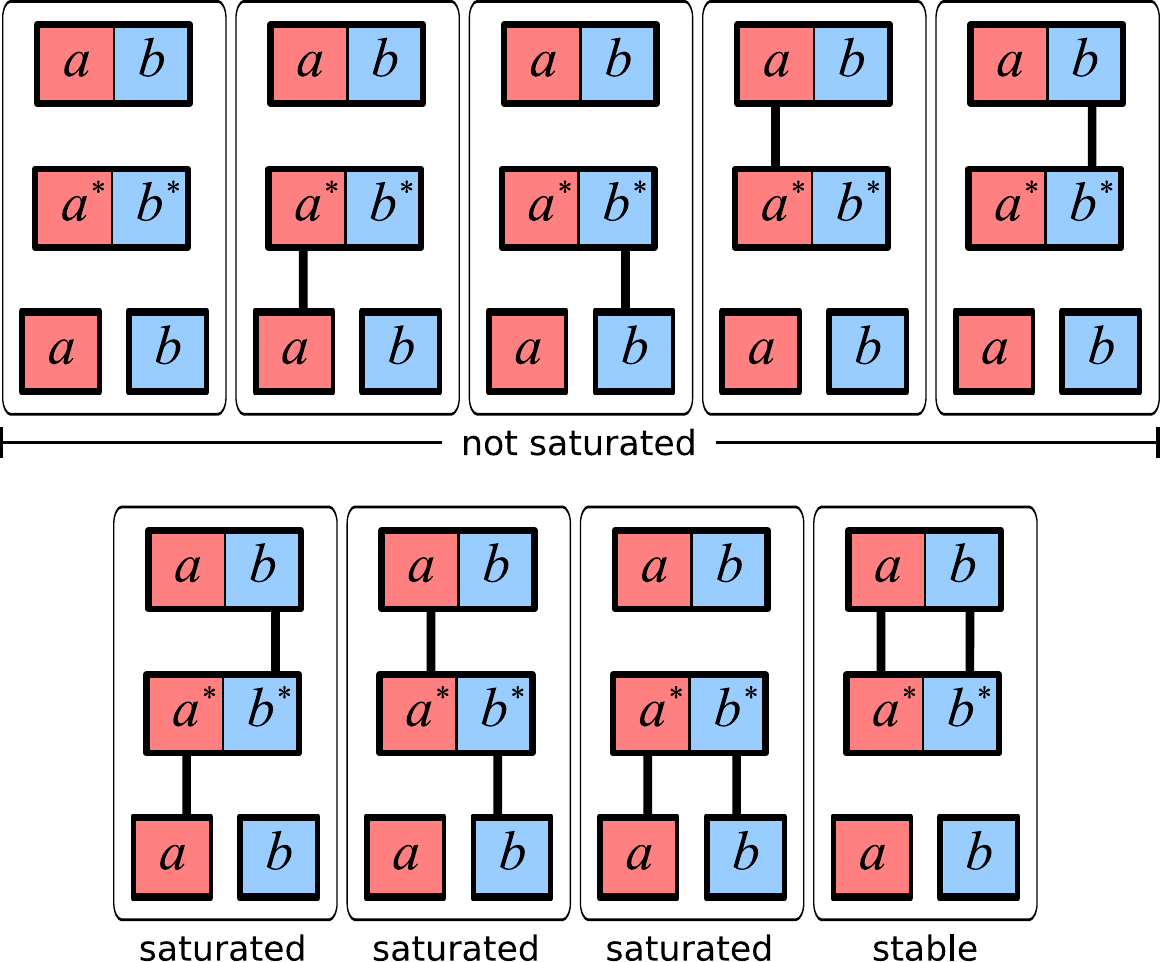}
\caption{
An example TBN $\calT = (\calD,\calM)$.
$\calD = \{a, b\}$ and
$\calM = \{\vm_1, \vm_2, \vm_3, \vm_4\}$,
where monomers $\vm_1=\{a,b\}, \vm_2=\{a^*, b^*\}, \vm_3=\{a\}$, and $\vm_4=\{b\}$.
Note that the order of domains does not matter (thus, $\{a,b\} = \{b,a\}$).
There are nine distinct configurations for the monomer collection $\vvc = \{\vm_1, \vm_2, \vm_3, \vm_4\}$
consisting of a single copy of each of these monomers.
The five in the top row are \emph{not saturated} meaning that 
they do not mazimize the number of bound domains, whereas the four configurations in the bottom row are all \emph{saturated}.
In addition to being saturated, the configuration in the bottom right is \emph{stable} as it maximizes the number of separate complexes (3) among all saturated configurations (the other saturated configurations have 2).  
}
\label{fig:tbn-example}
\end{figure}

Since one monomer collection usually contains more than one copy of the same domain type,
we use the term \emph{domain} to refer to each copy separately.\footnote{For instance, the monomer collection shown in  Fig.~\ref{fig:tbn-example} has 2 domains of type $a$, 2 domains of type $b$, and 1 domain of type $a^*$ and $b^*$ each.}
We similarly reserve the term \emph{monomer} to refer to a particular instance of a monomer type if a monomer collection has multiple copies of the same monomer type.

A single monomer collection $\vvc$ can take on different configurations depending on how domains in monomers are bound to each other.
To formally model configurations, we first need the notion of a bond assignment.
 Let $(U,V,E)$ be the bipartite graph  describing all possible bonds, where
$U$ is the multiset of all primary domains in all monomers in $\vvc$,
$V$ is the multiset of all codomains in all monomers in $\vvc$,
and $E$ is the set of edges between primary domains and their complementary codomains $\{ \{u,v\} \mid u \in U, v \in V, v = u^*\}$.
A \emph{bond assignment} $M$ is  a matching\footnote{A matching of a graph is a subset of edges that share no vertices in common. In our case this enforces that a domain is bound to at most one other domain.} on  $(U,V,E)$.
Then, a \emph{configuration} $\alpha$ of monomer collection~$\vvc$ is  the (multi)graph 
$(U \cup V,E_M)$, where 
the edges $E_M$ describe both the association of domains within the same monomer, and the bonding due to $M$. 
Specifically, for each pair of domains $d_i,d_j \in \calD \cup \calD^*$ that are part of the same monomer in $\vvc$, let $\{d_i,d_j\} \in E_M$, calling this a \emph{monomer edge}, and for each edge $\{d_i,d_i^*\}$ in the bond assignment $M$, let $\{d_i,d_i^*\} \in E_M$, calling this a \emph{binding edge}.
Let $[\vvc]$ be the set of all configurations of a monomer collection $\vvc$.
We say the size of a configuration, written $|\alpha|$, is simply the number of monomers in it.

Another graph that will be useful in describing the connectivity of the monomers, independent of which exact domains are bound, is the \emph{monomer binding graph} $G_\alpha = (V_\alpha, E_\alpha)$, which is obtained by contracting each monomer edge of $\alpha$.
In other words, $V_\alpha$ is the set of monomers in $\alpha$, with an edge between monomers that share at least one pair of bound domains.

Which configurations are thermodynamically favored over others depends on two properties of a configuration: its bond count and entropy.
The \emph{enthalpy} $H(\alpha)$ of a configuration is the number\footnote{We are assuming bonds are of equal strength (although
the definition can be naturally generalized to bonds of different strength).} of binding edges (i.e., the cardinality of the matching $M$).
The \emph{entropy} $S(\alpha)$ of a configuration is the number of connected components of $\alpha$.\footnote{Our use of the terms ``enthalpy'' and ``entropy'', and notation $H$ and $S$ is meant to evoke the corresponding physical notions.  
Note, however, that there are other contributions to physical entropy besides the number of separate complexes. 
Indeed, the free energy contribution of forming additional bonds typically contains substantial enthalpic and entropic parts.
}
Each connected component is called a \emph{polymer}.\footnote{We are generalizing the convention for the word ``polymer'' in the chemistry literature.
We have no requirement that a polymer be linear, nor that it consist of repeated subunits.
We chose ``polymer'' rather than ``complex'' to better contrast with ``monomer''.
}
Note that a polymer is itself a configuration, but of a smaller monomer collection $\vvc' \leq \vvc$. 
As with all configurations, the size of a polymer is the number of monomers in~it.

\begin{figure}[t]
\centering
\includegraphics[width=.8\textwidth]{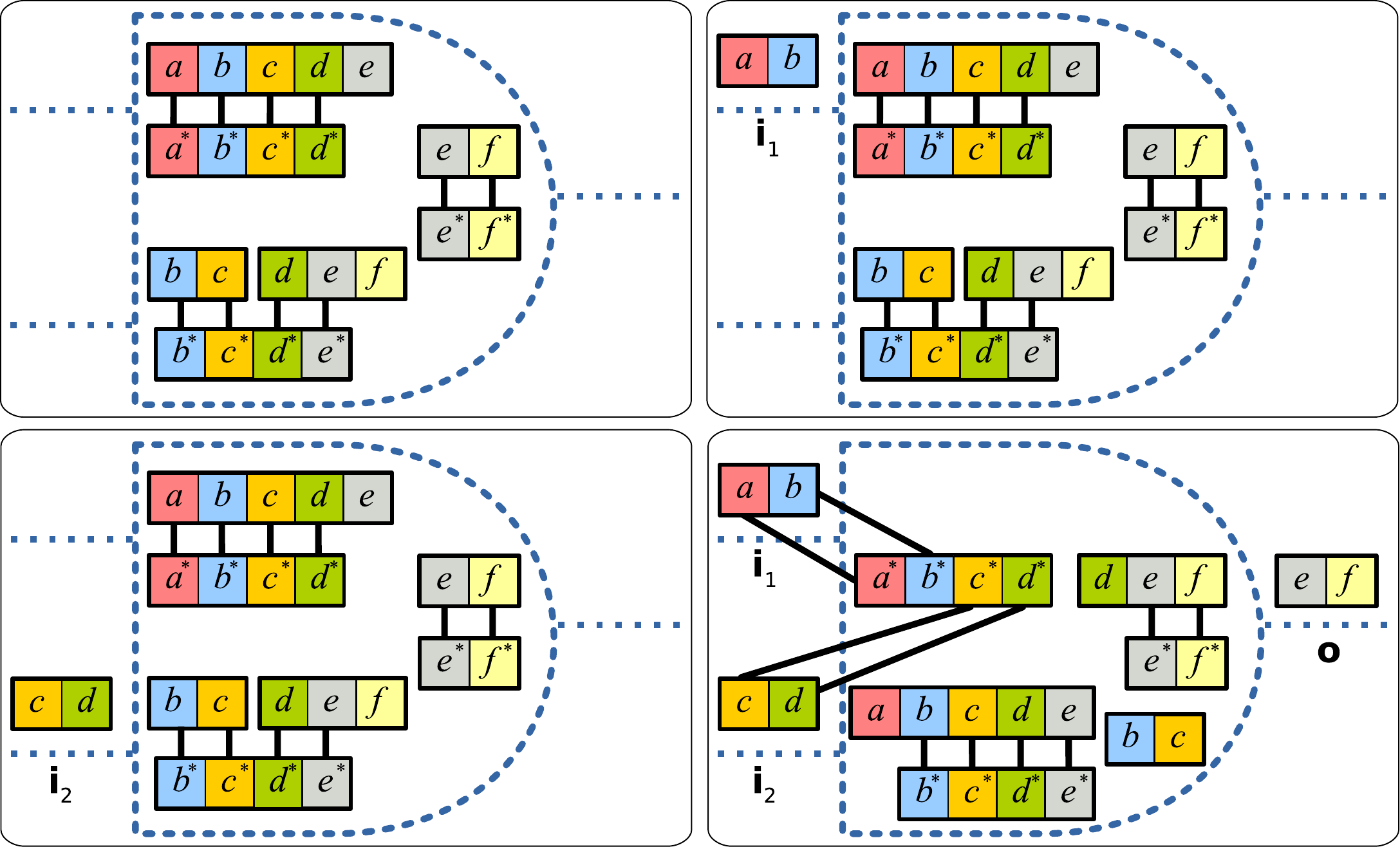}
\caption{
{\bf Basic AND gate:}
Monomers $\vi_1=\{a, b\}$ and $\vi_2=\{c, d\}$ represent the input, $\vo=\{e,f\}$ represents the output, while the remainder 
are intermediate monomers to implement the logic relating the input to the output.
If either or both inputs are missing, then the only stable configuration
has the present input monomers free (unbound) and the output monomer $\vo$ not free (bound).
If both input monomers are present, then there are two stable configurations: one with inputs free (and $\vo$ not free), or the one depicted with $\vo$ free and both inputs bound.
}
\label{fig:simpleANDgates}
\end{figure}

Intuitively, 
configurations with higher enthalpy $H(\alpha)$ (more bonds formed) and higher entropy $S(\alpha)$ (more separate complexes) are thermodynamically favored. 
What happens if there is a conflict between the two?
One can imagine capturing a tradeoff between enthalpy and entropy by some linear combination of $H(\alpha)$ and $S(\alpha)$.
In DNA nanotechnology applications,
the tradeoff can be controlled by increasing the number of nucleotides constituting a binding domain (increasing the weight on $H(\alpha)$), or by decreasing concentration (increasing the weight on $S(\alpha)$).%
\footnote{In typical DNA nanotechnology applications, 
the Gibbs free energy $\Delta G(\alpha)$ of a configuration $\alpha$ can be  estimated as follows.
Bonds correspond to domains of length~$l$ bases,
and forming each base pair is favorable by $\Delta G^\circ_{\text{bp}}$.
Thus, the contribution of $H(\alpha)$ to $\Delta G(\alpha)$ is $(\Delta G^\circ_{\text{bp}} \cdot l) H(\alpha)$.
At $1$ M, the free energy penalty due to decreasing the number of separate complexes by $1$ is $\Delta G^\circ_{\text{assoc}}$.
At effective concentration $C$ M, this penalty increases to 
$\Delta G^\circ_{\text{assoc}} + RT \ln(1/C)$.
As the point of zero free energy, we take the configuration with no bonds, and all monomers separate.
Thus, the contribution of $S(\alpha)$ to $\Delta G(\alpha)$ is $(\Delta G^\circ_{\text{assoc}} + RT \ln(1/C))(|\alpha|-S(\alpha))$, where $|\alpha|$ is the total number of monomers.
To summarize,
\begin{align*}
\Delta G(\alpha) &= 
(\Delta G^\circ_{\text{bp}} \cdot l) H(\alpha) + 
(\Delta G^\circ_{\text{assoc}} + RT \ln(1/C))(|\alpha|-S(\alpha)).
\end{align*}
Note that, as expected, this is a linear combination of $H(\alpha)$ and $S(\alpha)$, and that increasing the length of domains~$l$ weighs $H(\alpha)$ more heavily, while decreasing the concentration $C$ weighs $S(\alpha)$ more heavily.
Typically $G^\circ_{\text{bp}} \approx -1.5$ kcal/mol, and $G^\circ_{\text{assoc}} \approx 1.96$ kcal/mol~\cite{santalucia2004thermodynamics}.
}

In the rest of this paper, we study the particularly interesting limiting case in which enthalpy is \emph{infinitely} more favorable than entropy.\footnote{Note that the other limiting case, where entropy is infinitely more favorable, is degenerate: the most favorable configuration in that case always has every monomer unconnected to any other.}
We say a configuration $\alpha$ is \emph{saturated} if it has no pair of domains $d$ and $d^*$ that are both unbound;
this is equivalent to stating that $\alpha$ has maximal bonding among all configurations in $[\vvc]$.
We say a configuration $\alpha\in[\vvc]$ is \emph{stable} (aka thermodynamically favored) if it is saturated and maximizes the entropy among all saturated configurations, i.e., every saturated configuration $\alpha'\in[\vvc]$ obeys $S(\alpha') \leq S(\alpha)$.
Let $[\vvc]_\Box$ denote the set of stable configurations of monomer collection $\vvc$.
See Fig.~\ref{fig:tbn-example} for an example thermodynamic binding network that has a single stable configuration.
We note that, consistent with our model,
in DNA strand displacement cascades ``long'' domains are assumed to always be paired,
and systems can be effectively driven by the formation of more separate complexes~\cite{ZhaTurYurWin07}.



\section{Thermodynamic Boolean formulas} \label{sec:boolean}

Fig.~\ref{fig:simpleANDgates} shows an example of a TBN that performs AND computation, based on the CRN strand displacement gate from ref.~\cite{thachuk2015leakless}.
Realized as a strand displacement system, it has a kinetic pathway taking the untriggered (left) to the triggered (right) configuration.
The inputs are specified by the presence (logical value 1) or absence (logical value 0) of the input monomers $\vi_1$ and $\vi_2$.   
The output convention followed is the following. 
The output is 1 if and only if \emph{some} stable configuration has the output monomer $\vo$ unbound to any other monomer (free).
This can be termed the \emph{weak} output convention.
Alternatively, in the \emph{strong} output convention,
output 1 implies \emph{every} stable configuration has the output monomer $\vo$ free, and output 0 implies  \emph{every} stable configuration has the output monomer $\vo$ bound to some other monomer. 
More complex AND gate designs are compatible with the strong output convention (not shown).


Note that even the weak output convention, 
coupled with a kinetic pathway releasing the output given the correct inputs, 
can be used to argue that:
(1) if the correct inputs are present the output will be produced (via kinetic argument),
(2) if the correct inputs are not present then ultimately little output will be free (thermodynamic argument).
In the context of strand displacement cascades,  TBNs can explore arbitrary structures (pseudoknots, remote toeholds, etc) since we do not impose any ordering on domains in a monomer, nor any geometry. 
This strengthens the conclusion of (2), showing that arbitrary (even unknown) kinetic pathways must lead to a thermodynamic equilibrium with little output.

While individual AND gates can be proven correct with respect to the above output conventions (e.g., through the SAT solver of ref.~\cite{keenan}),
it remains to be shown that these components can be safely composed into arbitrary Boolean circuits.
Note that the input and output monomers have orthogonal binding sites.
This is important for composing AND gates, where the output of one  acts as an input to another.
As is typical for strand displacement logic, OR gates can be trivially created when multiple AND gates have the same output.
Dual-rail AND/OR circuits are sufficient to compute arbitrary Boolean functions without explicit NOT gates.
Nonetheless it is not obvious that the input convention (complete presence or absence of input monomers) matches the output convention (weak or strong).
It is also not clear how statements about the stable configurations of the whole circuit can be made based on the stable configurations of the individual modules.

We now show that correct composition can be proven in certain cases.
Although we believe that the gate shown in Fig.~\ref{fig:simpleANDgates} is composable, the argument below relies on a different construction.
We further consider a restricted case of AND gate formulas (trees).

An important concept in the argument below is the notion of ``distance to stability''.
This refers to the difference between the entropy of the stable configurations and the largest entropy of a saturated configuration with incorrect output.
The larger the distance to stability, the larger the entropy penalty to incorrectly producing the output.
Unlike the simple AND gate from Fig.~\ref{fig:simpleANDgates}, the constructions below can be instantiated to achieve arbitrary desired distance to stability (by increasing the redundancy parameter $n$). 

Many open questions remain. 
Can our techniques be generalized to arbitrary circuits, rather than just trees of AND gates?
Can we prove these results for logic gates that have a corresponding kinetic pathway (like the AND gates in Fig.~\ref{fig:simpleANDgates} which can be instantiated as strand displacement systems)?
Finally, in our Boolean gate constructions, we assume that the monomer collection has exactly one copy of certain monomers. 
It remains open whether these schemes still work if there are many copies of all monomers.

\subsection{Translator cascades} \label{sec:translatorcascade}

We begin with the simplest of circuits, \emph{translator cascades} ($x_1 \rightarrow  x_2 \rightarrow ... \rightarrow x_{k+1}$), which simply propagate signal through $k$ layers when the input signal $x_1$ is present.
Logically a translator gate is simply a repeater gate.
The input is the presence or absence of the input monomer consisting of $n$ copies of domain $x_1$. 
Our analysis below implies that if and only if the input is present, there is a stable configuration with $n$ copies of $x_{k+1}$ domain in the same polymer.
The \emph{terminator gadget} converts this output to the weak output convention defined above (whether or not the monomer consisting of $n$ copies of domain $x_{k+1}$ is free).
The following Lemma shows that we can exactly compute the distance from stability of a translator cascade shown in Fig.~\ref{fig:translator}.
Besides being a ``warm-up'' for AND gate cascades, the Lemma is used in the proof of Theorem~\ref{thm:and-tree-dst-for-leak}.


\begin{figure}[t]
\centering
\includegraphics[width=.8\textwidth]{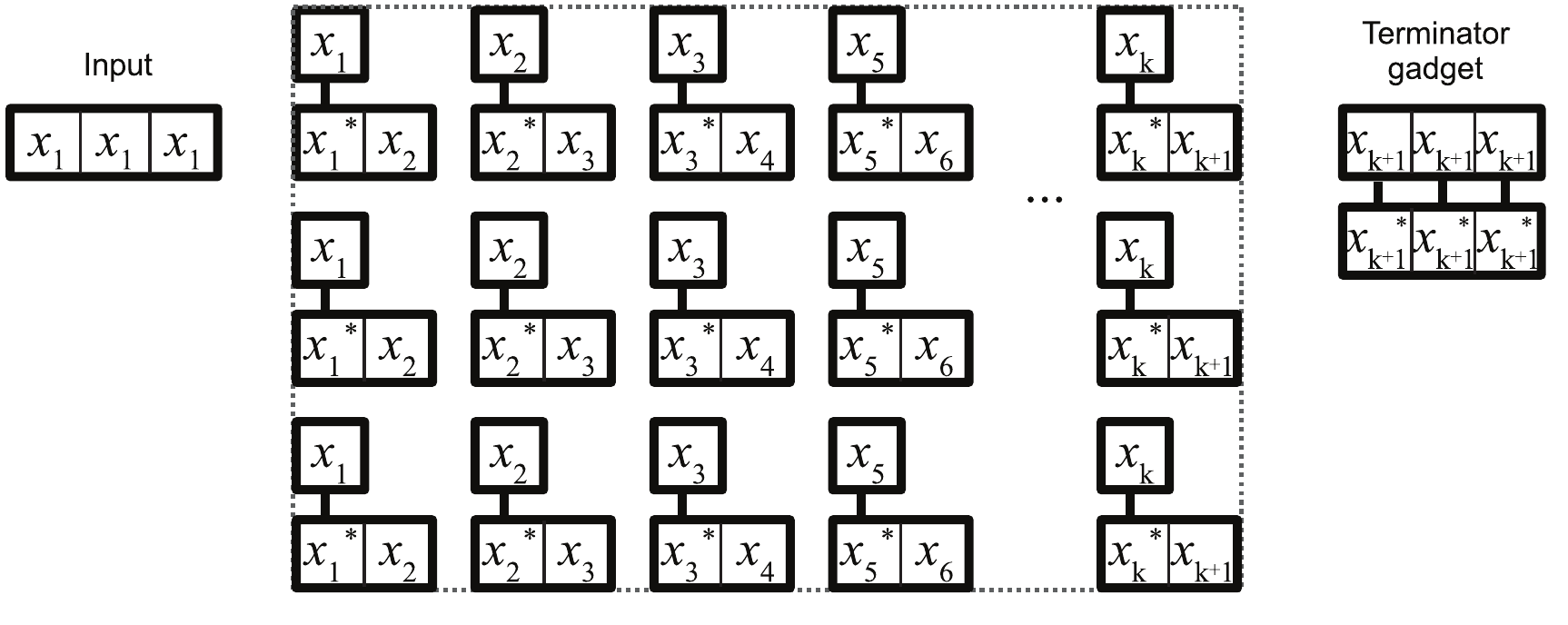}
\caption{
A cascade of $k$ translator gates discussed in Section~\ref{sec:translatorcascade}, with redundancy parameter $n=3$.
We say that a configuration of a formula has output $1$ if the terminator monomer $\{x_{k+1},\ldots,x_{k+1}\}$ is free, and has output $0$ otherwise.
Redundancy parameter $n$ specifies the number of copies of monomers and domains as shown.
}
\label{fig:translator}
\end{figure}


\begin{observation}\label{obs:translator}
  The intended configuration $\alpha$ of a monomer collection representing a depth $k$, redundancy $n$ translator cascade, without input, and with output $0$, is saturated and has $S(\alpha)=nk+1$. (See Fig.~\ref{fig:translator}.)
\end{observation}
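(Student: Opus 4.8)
The statement concerns one explicitly specified configuration, so the plan is a direct structural verification rather than an optimization argument; I would split it into the two assertions ``$\alpha$ is saturated'' and ``$S(\alpha)=nk+1$'' and check each by reading off the construction of Fig.~\ref{fig:translator}.

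For saturation, I would enumerate the domain types occurring in the no-input monomer collection and argue that $\alpha$ leaves no pair $d,d^*$ simultaneously unbound. The inter-layer interface domains --- the copies of $x_j$ and $x_j^*$ that couple layer $j-1$ to layer $j$ for $2\le j\le k$, together with the copies coupling layer $k$ to the terminator gadget --- are matched to one another inside the intended untriggered configuration, hence bound. The only domains left unbound in $\alpha$ are the first-layer input-facing sites of type $x_1^*$: since the input monomer (the one carrying the $n$ copies of $x_1$) is absent, there is no copy of $x_1$ anywhere in the collection, so no $x_1^*$ can lie on a binding edge. A symmetric remark disposes of any leftover site on the terminator side (in the absence of input the cascade's output is sequestered, so the complementary domain it would bind is nowhere available). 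Therefore $\alpha$ has no jointly-unbound complementary pair and is saturated.

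For the entropy, I would exploit the redundancy-$n$ structure: the collection consists of $n$ disjoint copies of the ``width-one'' cascade together with the monomers the construction includes only once, in particular the terminator gadget. In $\alpha$ every binding edge stays inside a single copy of a single-layer gate gadget --- each untriggered gate holds only its own output monomer, so no binding edge leaves a gate toward a neighbouring layer, and distinct redundancy copies share no domain at all --- hence each of the $k$ layers contributes exactly $n$ connected components (one per redundancy copy), and the terminator gadget contributes exactly one more, so $S(\alpha)=nk+1$.

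The only place that requires care is the component bookkeeping in the last paragraph: one must be certain that the intended $\alpha$ contains no ``cross-link'' binding edge that would merge two of these $nk$ polymers --- this is precisely where it matters that every gate is taken in its untriggered/resting form rather than some other saturated form --- and that the terminator gadget is accounted for as exactly one additional component, so the total is $nk+1$ and not $nk$. Everything else is routine inspection of Fig.~\ref{fig:translator}.
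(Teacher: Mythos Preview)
The paper leaves this as a bare Observation with only a pointer to Fig.~\ref{fig:translator}, so your plan --- direct inspection of the one named configuration, split into ``saturated'' and ``$S(\alpha)=nk+1$'' --- is exactly the intended kind of check, and your component count $nk+1$ is correct.

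Your saturation paragraph, however, misreads the construction. Each layer $i$ contains not only the $n$ gate monomers $\{x_i^*,x_{i+1}\}$ but also $n$ single-domain monomers $\{x_i\}$ (this is the decomposition $L_i = n\times\{x_i^*,x_{i+1}\}\cup n\times\{x_i\}$ used in the proof of Lemma~\ref{lem:translator}). So even without the input monomer there \emph{are} $n$ copies of $x_1$ present, and in the intended $\alpha$ each $\{x_1\}$ is bound to an $\{x_1^*,x_2\}$; no $x_1^*$ is left unbound. What is actually left unbound in $\alpha$ is, for each $j=1,\dots,k$, the $n$ copies of $x_{j+1}$ sitting on the gate monomers $\{x_j^*,x_{j+1}\}$. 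These dangling primary domains are harmless for saturation precisely because every copy of $x_{j+1}^*$ --- on the next layer's gates, or on the terminator cap $\{x_{k+1}^*,\dots,x_{k+1}^*\}$ when $j=k$ --- is bound. Thus your conclusion is right but for the opposite reason: all codomains are bound and it is excess unstarred domains that dangle, not the other way round.

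The same misreading muddles the phrase ``each untriggered gate holds only its own output monomer''. In the intended $\alpha$ each layer-$i$ polymer is the pair $\{x_i\}$ bound to $\{x_i^*,x_{i+1}\}$ via $x_i$, with the $x_{i+1}$ domain exposed; there are $n$ such pairs per layer. The terminator contributes the single polymer $\{x_{k+1},\dots,x_{k+1}\}$ bound to $\{x_{k+1}^*,\dots,x_{k+1}^*\}$. With these corrections your count $nk+1$ stands.
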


\begin{lemma}\label{lem:translator}
  If $\gamma$ is a saturated configuration of a monomer collection representing a depth $k$, redundancy $n$ translator cascade, without input, and with output $1$, then $S(\gamma)=n(k-1)+2$.
\end{lemma}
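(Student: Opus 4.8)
The plan is to use the saturation constraints together with the ``output $1$'' condition to pin any such $\gamma$ down to essentially a single shape, and then just count its polymers; the value $n(k-1)+2$ will come out as an identity (not an inequality), which is why the statement can be an equality.

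Start at the terminator end. ``Output $1$'' says the terminator monomer $T=\{x_{k+1},\ldots,x_{k+1}\}$ is a polymer by itself, i.e.\ all $n$ of its $x_{k+1}$-domains are unbound, so by saturation every $x_{k+1}^{*}$-domain in the collection must be bound. But (recall the construction of Fig.~\ref{fig:translator}) the $x_{k+1}^{*}$-domains occur only on the terminator gadget's ``hub'' monomer, exactly $n$ of them, and the only $x_{k+1}$-domains not on $T$ are the $n$ output domains of the last translator layer. Hence those $n$ hub domains are bound to those $n$ last-layer domains, and the matching is forced to be perfect by the counting equality $n=n$. So the hub together with all $n$ last-layer monomers lies in one common polymer, while $T$ forms a lone polymer.

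Next pin down the rest of $\gamma$. For each intermediate signal domain $x_{i}$ with $2\le i\le k$, the collection has exactly $n$ copies of $x_{i}$ (on signal monomers) but $2n$ copies of $x_{i}^{*}$ (one on each ``receiving'' backbone of layer $i$ and one on each ``producing'' backbone of layer $i-1$), so saturation forces all $n$ copies of $x_{i}$ to be bound. Thus no signal monomer is ever free, and since each signal monomer carries a single domain it is a leaf attached to exactly one backbone; in particular no two backbones are directly bound, and no backbone of a layer $i<k$ is ever in the polymer of the hub. Therefore the polymers of $\gamma$ are precisely: (i) the polymer containing the hub and all $n$ last-layer backbones (plus some signal leaves); (ii) the lone terminator $T$; and (iii) for each of the $n(k-1)$ backbones of layers $1,\ldots,k-1$, the polymer consisting of that backbone and whatever signal leaves hang off it. This gives $S(\gamma)=1+1+n(k-1)=n(k-1)+2$, and the count is independent of which signal leaf attaches to which backbone (the only remaining freedom), so it holds for every such $\gamma$.

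The step I expect to be the main obstacle is the rigidity argument: one must check carefully that the saturation/counting bounds really force this structure, ruling out e.g.\ a configuration in which some hub domain binds a last-layer domain while another last-layer domain is left dangling (impossible once $T$'s domains are removed from play, again by the equality $n=n$ between hub $x_{k+1}^{*}$-domains and last-layer $x_{k+1}$-domains), or a path unexpectedly joining an early backbone to the hub. The two ends of the cascade need the most care: the terminator end as above, and the input end, where the layer-$1$ receiving domains $x_{1}^{*}$ have no partners because the input monomer is absent. Everything else is bookkeeping; the computation parallels the one behind Observation~\ref{obs:translator}, except that freeing $T$ collapses the $n$ formerly separate last-layer monomers onto the hub, changing $S$ by exactly $-(n-1)$.
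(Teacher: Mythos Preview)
Your rigidity argument fails because the domain count is backwards. In the cascade of Fig.~\ref{fig:translator}, layer~$i$ consists of $n$ backbones $\{x_i^*,x_{i+1}\}$ and $n$ signals $\{x_i\}$. Thus for $2\le i\le k$ there are $2n$ copies of the \emph{primary} domain $x_i$ (on the $n$ signals $\{x_i\}$ and on the $n$ layer-$(i{-}1)$ backbones $\{x_{i-1}^*,x_i\}$) but only $n$ copies of the codomain $x_i^*$ (on the layer-$i$ backbones). Saturation therefore forces every $x_i^*$ to be bound and leaves $n$ copies of $x_i$ free---not the other way around. Consequently a layer-$i$ backbone's $x_i^*$ may bind either a signal $\{x_i\}$ \emph{or} a layer-$(i{-}1)$ backbone, so backbones can be directly bound and earlier backbones can sit in the hub polymer. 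For $k=2$, the configuration in which the hub binds all $n$ layer-$2$ backbones, each of these binds a layer-$1$ backbone, each layer-$1$ backbone binds an $\{x_1\}$, and all $n$ signals $\{x_2\}$ are free, is saturated with output~$1$; its polymers are the hub polymer, the $n$ free $\{x_2\}$'s, and $T$, so $S=n+2$---correct value, but not your claimed structure. The ``only remaining freedom'' you identify (which signal leaf hangs where) is not the only freedom, and the failure mode you flag in your last paragraph (``a path unexpectedly joining an early backbone to the hub'') actually occurs.

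The statement is still an equality because the entropy is invariant across all these choices, but that invariance needs an argument. The paper proves it by induction on $k$: given any saturated output-$1$ configuration $\gamma_{k+1}$, it swaps each $\{x_1^*,x_2\}$--$\{x_2^*,x_3\}$ bond for an $\{x_2\}$--$\{x_2^*,x_3\}$ bond (a free $\{x_2\}$ exists for each such swap), producing $\gamma'_{k+1}$ with no bonds between layer~$1$ and the rest. Each swap detaches a $\{x_1^*,x_2\}$ together with its sole neighbor $\{x_1\}$ while absorbing a previously free $\{x_2\}$, so entropy is preserved; then $\gamma'_{k+1}$ splits as $n$ layer-$1$ polymers plus a depth-$k$ cascade with output~$1$, and the inductive hypothesis finishes. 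If you want to salvage a direct (non-inductive) count, you would need to show that every bond swap of this kind is entropy-neutral at every layer, which amounts to the same bookkeeping.
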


\opt{arxiv_version}{The proof of Lemma~\ref{lem:translator} appears in Appendix~\ref{app:translator}.}
\opt{LNCS_version}{The proof of Lemma~\ref{lem:translator} appears in the full version of this paper.}
%
Taken together, Observation~\ref{obs:translator} and Lemma~\ref{lem:translator} imply that the redundancy parameter ($n$) guarantees the distance to stability ($n-1$) for a translator cascade of any length.

\subsection{Trees of AND gates}
\label{sec:boolformulas}

In this section we motivate how Boolean logic gates can be composed such that the overall circuit has a guaranteed distance to stability, relative to a redundancy parameter $n$.
Specifically, we start with the AND gate design of Fig.~\ref{fig:AND-gate},
and we give a concrete argument for a tree of these AND gates (e.g., Fig.~\ref{fig:AND-tree-leak-path}).

\begin{figure}[t]
\centering
\includegraphics[width=\textwidth]{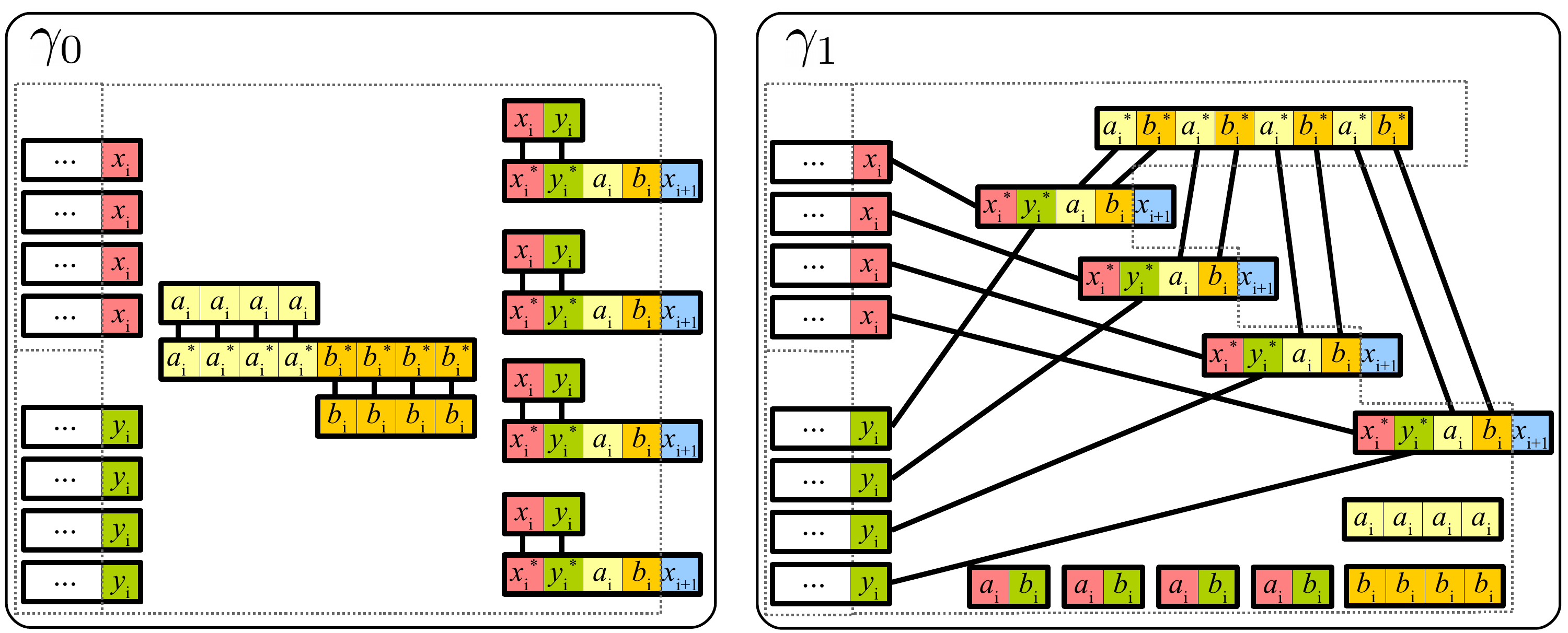}
\caption{AND gates used in Section~\ref{sec:boolformulas}, with redundancy parameter $n=4$.
Two saturated configurations are shown:
$\gamma_0$ is the intended configuration corresponding to output of $0$.
$\gamma_1$ is the intended configuration corresponding to output of $1$.
Input domains are $x_i$  $y_i$, and output domains are $x_{i+1}$.
The output is considered to be 1 in any configuration where all $n$ output domains are in the same 
polymer, 0 otherwise. 
Dashed boxes represent that any domain type appearing inside of a box does have have a complement appearing outside of the box.
}
\label{fig:AND-gate}
\end{figure}

\begin{figure}[t]
\centering
\includegraphics[width=.6\textwidth]{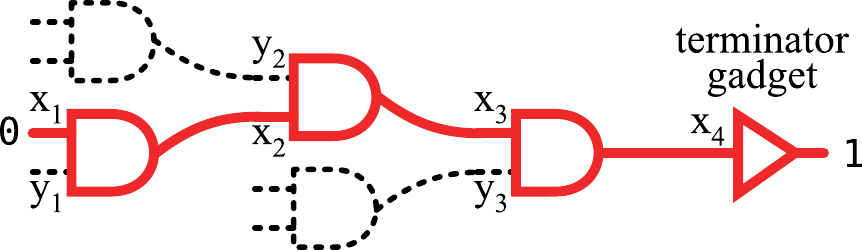}
\caption{Shown highlighted is a leak path through a tree of AND gates from a missing input (``0'') to erroneous output (``1'').}
\label{fig:AND-tree-leak-path}
\end{figure}

\begin{theorem}\label{thm:and-tree-dst-for-leak}
Consider a TBN for AND gates, with redundancy $n$, composed into a tree of depth $k$.
If at least one of the inputs is not present, the distance to stability for any saturated configurations with output $1$ is at least $n-2k-1$.
\end{theorem}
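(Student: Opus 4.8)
The plan is to use the structure of a tree of AND gates together with the ``logically excising domains'' technique hinted at in the introduction, and to reduce the analysis of a leak path to repeated applications of the translator-cascade bound (Observation~\ref{obs:translator} and Lemma~\ref{lem:translator}). First I would fix a saturated configuration $\gamma$ with output $1$ and identify the \emph{leak path}: since the root gate outputs $1$, all $n$ of its output domains lie in one polymer, and tracing back through the tree (as in Fig.~\ref{fig:AND-tree-leak-path}) there must be a root-to-leaf path of gates each of which is ``behaving as if'' its relevant input were present, terminating at a gate whose input is actually missing (``$0$''). Along this path of length $k$, I would isolate the chain of monomers that carries the erroneous signal; off-path gates and off-path inputs contribute at most as much entropy as in the intended output-$0$ configuration, so the entropy deficit of $\gamma$ is concentrated on the path.

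The core step is to bound the entropy lost along this single path. The idea is that the spine of the leak path looks essentially like a translator cascade of depth $k$ (or $2k$ if we count each AND gate as contributing two ``translator-like'' layers, one per input rail — this is where the $2k$ in the bound comes from), with the input absent but the output forced to be $1$. Applying Observation~\ref{obs:translator}, the intended (output-$0$) configuration restricted to this spine is saturated with entropy $n\ell + 1$ for the relevant depth $\ell$ (with $\ell$ on the order of $2k$). Applying Lemma~\ref{lem:translator}, any saturated configuration of that spine with output $1$ has entropy only $n(\ell-1) + 2$, a loss of $n - 1$. I would then argue that the remainder of $\gamma$ (everything off the spine) can be completed to a saturated configuration without costing more entropy than the corresponding part of the intended output-$0$ stable configuration — this is the excision argument: remove the spine's domains logically, observe the rest is an independent sub-TBN whose stable entropy is already accounted for. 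Combining, $S(\gamma)$ is at least $n - 1$ below the stable entropy, minus a correction term of order $k$ arising from the difference between a genuine translator cascade and the actual AND-gate spine (the two input rails, the branching structure at each gate, and the interface monomers), which I would bound crudely by $2k$, yielding distance to stability at least $n - 2k - 1$.

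I would organize the write-up as: (i) define the leak path and the spine formally; (ii) a lemma stating that the spine, with missing input and output $1$, has entropy at most (intended spine entropy) $-(n-1)$, proved by reducing to Lemma~\ref{lem:translator} after contracting AND-gate internal structure into translator layers; (iii) a lemma (the excision step) stating that off-spine monomers contribute no more entropy in $\gamma$ than in the intended output-$0$ configuration; (iv) arithmetic combining (ii), (iii), and the $O(k)$ correction into the stated bound.

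The main obstacle I anticipate is step (ii): showing that a saturated configuration of the actual AND-gate spine with forced output $1$ and absent input really does suffer the full $n-1$ entropy penalty, rather than cheating by using the extra ``helper'' monomers of the AND gates (the ones in dashed boxes in Fig.~\ref{fig:AND-gate}) to recover connectivity. Lemma~\ref{lem:translator} is stated only for translator cascades, so I need a careful argument that the AND-gate helper/fuel monomers, when the second input to the gate is also absent or when only one rail is active, cannot be re-bound in a way that both saturates the spine and keeps all $n$ output domains co-polymerized without paying the cost — equivalently, that the AND gate ``degrades'' to a translator for the purpose of this counting. Handling the interaction between the two input rails of each AND gate along the path, and making sure the $2k$ slack genuinely absorbs all the boundary effects between consecutive gates, is the delicate part; everything else is bookkeeping on connected components.
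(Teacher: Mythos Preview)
Your overall shape---identify the leak path, excise to isolate it, reduce the spine to a translator cascade and invoke Lemma~\ref{lem:translator}---matches the paper. But two pieces of your plan differ from the paper's argument in ways that matter.

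First, your step (iii) tries to compare the off-spine part of $\gamma$ to the off-spine part of the \emph{intended} output-$0$ configuration. This is hard to justify: after excision the off-spine piece $\gamma'_R$ is some arbitrary saturated configuration of a sub-TBN, and there is no easy reason it should have no more entropy than the intended configuration's off-spine part. The paper sidesteps this entirely. It does not compare to the global intended configuration at all; instead it builds a comparison configuration $\alpha = \beta \cup \gamma'_R$ where $\beta$ is the untriggered (output-$0$) configuration of the leak-path monomers and $\gamma'_R$ is \emph{reused verbatim}. Because $\beta$ is saturated on its own, $\alpha$ is saturated, and $S(\gamma'_R)$ cancels in the difference $S(\alpha)-S(\gamma)$. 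So no off-spine entropy bound is ever needed. This is the main idea you are missing.

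Second, your ``main obstacle''---the AND-gate helper monomers possibly cheating---is exactly what the paper resolves, and the resolution is simpler than you fear: just excise those domain types too (the $a_i,b_i,a_i^*,b_i^*$ in the paper's Manipulation~2). This deletes at most three entire monomers per layer, so costs at most $3k$ in entropy, and what remains is \emph{literally} a depth-$k$ translator cascade (not $2k$; the second input rail is removed by excising the $y_i$ fan-in domains in Manipulation~1). Lemma~\ref{lem:translator} then applies verbatim to give $S(\gamma''_L)=n(k-1)+2$. The final arithmetic is $S(\alpha)-S(\gamma) \geq [S(\gamma'_R)+k(n+1)+1] - [n(k-1)+2+S(\gamma'_R)+3k] = n-2k-1$, so the $2k$ comes from the $3k$ excision cost offset by the $k$ extra polymers the untriggered spine has over the bare translator, not from doubling the cascade depth.
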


\begin{proof}

Let $\gamma$ be any saturated configuration of the TBN with output $1$.
Consider the missing input and define the \emph{leak path} to be the linear sequence of AND gates from the missing input to and including the terminator gadget.
For convenience we imagine relabelling all the domains in the leak path indexed by the position of the AND gate in the leak path.
For example, Fig.~\ref{fig:AND-tree-leak-path} highlights the leak path through the tree from a missing input (``0'') to erroneous output (``1'').
Specifically, the domain names as shown in Fig.~\ref{fig:AND-gate} appear in the $i$th AND gate (for $1 \leq i \leq k$), where $x_{k+1}$ feeds into the terminator gadget.
Domains $y_i$ connect the leak path to the rest of the tree.

\begin{definition}
Given a configuration $\alpha$ of a monomer collection $\vvc$, we say we \emph{excise} a domain $d$ if we create a new configuration $\alpha'$ by removing the node corresponding to $d$ and all incident edges. (Note that $\alpha'$ is a configuration of a monomer collection of a different TBN.)
\end{definition}

\paragraph{Manipulation 1.} Excise all domains of type $y_i$ and codomains of type $y_i^*$ on monomers of the leak path involved in fan-in,  $1 \leq i \leq k$, yielding the new configuration $\gamma'$. Note that if domain $y_i$ is on a monomer other than the leak path, then it is not excised.

The leak path in $\gamma'$ now has no domains in common with the rest of the tree (and thus no bonds).
Let $\gamma'_L$ be the subconfiguration of the leak path, and let $\gamma'_R$ be the subconfiguration of the rest of the system.  (Note $\gamma’=\gamma'_L \cup \gamma’_R$.)

\begin{observation}\label{obs:excision}
Given a saturated configuration $\alpha$, if you excise all domains or codomains of a particular type (or both its domains and codomains) yielding $\alpha'$, then $\alpha'$ is saturated.
\end{observation}

By Observation~\ref{obs:excision} $\gamma’$ is saturated since for every domain type $y_i$ and codomain type $y_i^*$, every instance of $y_i^*$ is excised; $1 \leq i \leq k$.
This implies $\gamma'_L$ and $\gamma'_R$ are also saturated.

\paragraph{Manipulation 2.}  Excise all domains of type $a_i$ and $b_i$ and all codomains of type $a_i^*$ and $b_i^*$ in $\gamma'_L$, $1 \leq i \leq k$, yielding the new configuration $\gamma''_L$.  By Observation~\ref{obs:excision}, $\gamma''_L$ is saturated.

\begin{claim}\label{clm:manip1}
  $S(\gamma') \geq S(\gamma)$.
\end{claim}

\begin{subproof}[Proof of the claim]
Entropy can only be decreased via excision if an entire monomer is excised.  
Since Manipulation 1 only excised domain and codomain types from the set $\calD'=\bigcup\limits_{i=1}^{k} \{y_i, y_i^*\}$, and those domain types only appear on monomers which also have domain instances with types not in $\calD'$, then no entire monomer was excised.
\end{subproof}

\begin{claim}\label{clm:manip2}
  $S(\gamma''_L) \geq S(\gamma'_L)-3k$.
\end{claim}

\begin{subproof}[Proof of the claim]
For every layer $i$, $1 \leq i \leq k$, there are $3$ monomers that only contain domain and codomain types in the set $\{a_i, b_i, a_i^*, b_i^*\}$.
Therefore, $\gamma''_L$ contains at most $3$ fewer monomers than $\gamma'_L$, for each of the $k$ layers.
\end{subproof}

\begin{claim}\label{clm:entropy-gamma_L}
  $S(\gamma''_L) = n(k-1)+2$.
\end{claim}

\begin{subproof}[Proof of the claim]
Recognize that $\gamma''_L$ is a saturated configuration of a monomer collection representing a depth $k$, redundancy $n$ translator cascade, without input, and with output $1$.
The claim follows by Lemma~\ref{lem:translator}.
\end{subproof}

\begin{claim}\label{clm:entropy-gamma}
  $S(\gamma) \leq n(k-1) + 2 + S(\gamma’_R) + 3k$.
\end{claim}

\begin{subproof}[Proof of the claim]
\begin{align*}
    S(\gamma) &\leq S(\gamma')                        & \text{by Claim~\ref{clm:manip1}}\\
              &= S(\gamma'_L) + S(\gamma'_R)          & \\
              &\leq S(\gamma''_L) + S(\gamma'_R) + 3k & \text{by Claim~\ref{clm:manip2}}\\
              &\leq n(k-1) + 2 + S(\gamma'_R) + 3k    & \text{by Claim~\ref{clm:entropy-gamma_L}}
\end{align*}
\end{subproof}
Now, take the monomers from the leak path in $\gamma$, and configure them into the ``untriggered configuration'' (see Fig.~\ref{fig:AND-gate}, left), yielding subconfiguration $\beta$.
Let $\alpha = \beta \cup \gamma'_R$.
Note that $\beta$ is saturated, and therefore $\alpha$ is a saturated configuration of the entire tree (\textit{i.e.,} the same TBN as $\gamma$).

\begin{observation}\label{obs:alpha-entropy}
$S(\alpha)=S(\gamma'_R) + k(n+1) + 1$.
\end{observation}

Finally, consider the entropy gap between $\alpha$ and $\gamma$.

\begin{align*}
S(\alpha)-S(\gamma) &\geq S(\gamma'_R) + k(n+1) + 1 - S(\gamma) & \text{by Observation~\ref{obs:alpha-entropy}}\\
                    &\geq S(\gamma'_R) + k(n+1) + 1 & \\
                    & \qquad\qquad - \left( n(k-1) + 2 + S(\gamma'_R) + 3k \right) & \text{by Claim~\ref{clm:entropy-gamma}}\\
                    &= n - 2k - 1
\end{align*}

Therefore, there exists a saturated configuration with output $0$ over the same TBN as $\gamma$, but with entropy at least $n - 2k - 1$ larger, thus establishing the theorem.
\end{proof}

Theorem~\ref{thm:and-tree-dst-for-leak} seems to suggest that in order to maintain the bound on distance to stability for incorrect computation, the redundancy parameter $n$ should increase to compensate for an increase in circuit depth $k$.
However, a more sophisticated argument shows that manipulations $1$ and $2$ can decrease entropy by at most $k+1$.
Following the above argument, the distance to stability is found to be $n-2$.
This is optimal because a single AND gate with redundancy $n=2$ can be shown to have no entropy gap between output $0$ and output $1$ configurations.



\section{Thermodynamic self-assembly: Assembling large polymers} \label{sec:polymer-bounds}

TBNs can not only exhibit Boolean circuit computation, but they can also be thought of as a model of self-assembly.
Here we begin to explore this connection by asking a basic question motivated by the abstract Tile Assembly Model (aTAM)~\cite{winfree1998algorithmic}:
how many different monomer \emph{types} are required to assemble a large polymer?

Favoring enthalpy infinitely over entropy, on its face, \emph{appears} to encourage large polymers.
Perhaps we can imagine designing a single TBN $\calT$ that can assemble arbitrarily large polymers where for each $n \in \N$, $\calT$ has a stable polymer~$\alpha$ composed of at least~$n$ monomers.
In this section we show that this is  impossible:
every TBN $\calT = (\calD,\calM)$ has stable polymers of size at most exponential in the number of domain types $|\calD|$ and monomer types $|\calM|$ (Theorem~\ref{thm:polymer-upper-bound}).
The proof shows that any polymer $\rho$ larger than the bound can be partitioned into at least two saturated (maximally bound) polymers, which implies that $\rho$ is not stable.
Fig.~\ref{fig:long-polymer-breaking} gives an example.
We also show that this upper bound is essentially tight by constructing a family of systems with exponentially large stable polymers (Theorem~\ref{thm:polymer-lower-bound-tree}). 
Taken together, the exponential lower bound of Theorem~\ref{thm:polymer-lower-bound-tree} and upper bound of Theorem~\ref{thm:polymer-upper-bound} give a relatively tight bound on the maximum size achievable for stable TBN polymers.

Is it possible to construct algorithmically \emph{interesting} TBN polymers that are stable?
In the full version of this paper, 
we show that a typical binary counter construction from the aTAM model is not stable,
but can be modified to become stable in our model.
Importantly, 
this TBN binary counter demonstrates that in principle algorithmically complex assemblies could have effective assembly pathways (aTAM) as well as be thermodynamically stable (TBN).

\subsection{Superlative trees: TBNs with exponentially large stable polymers}

The next theorem shows that there are stable polymers that are exponentially larger than the number of domain types and monomer types required to assemble them. 

\begin{theorem} \label{thm:polymer-lower-bound-tree}
  For every $n,k \in \Z^+$, there is a TBN $\;\calT = (\calD,\calM)$ with $|\calD|=n-1$ and $|\calM|=n$, having a stable polymer of size $\frac{k^n-1}{k-1}.$
\end{theorem}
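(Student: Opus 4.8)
The plan is to realize $\frac{k^n-1}{k-1}=1+k+k^2+\cdots+k^{n-1}$ as the number of nodes of a complete $k$-ary tree of height $n-1$, and to build a TBN whose saturated configurations are forced to glue all monomers into exactly one such tree. Use one domain type per ``layer of tree edges'' and one monomer type per ``layer of tree nodes'': let $\calD=\{d_1,\ldots,d_{n-1}\}$ and $\calM=\{\vm_1,\ldots,\vm_n\}$, where $\vm_1$ consists of $k$ copies of $d_1$; $\vm_i=\{d_{i-1}^*\}\cup\{d_i,\ldots,d_i\}$ with $k$ copies of $d_i$, for $2\le i\le n-1$; and $\vm_n=\{d_{n-1}^*\}$. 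Then $|\calD|=n-1$ and $|\calM|=n$, as required (we may assume $n\ge 2$). Take the monomer collection $\vvc$ with $k^{i-1}$ copies of $\vm_i$ for each $i\in\ints{n}$, and let $\alpha$ be the configuration obtained from the obvious tree matching: the root's $k$ copies of $d_1$ bind the $d_1^*$ of the $k$ copies of $\vm_2$, and recursively each copy of $\vm_i$ binds its $k$ copies of $d_i$ to the $d_i^*$ of $k$ distinct copies of $\vm_{i+1}$.

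\textbf{Bookkeeping.} Next I would check counts. The only monomer carrying $d_i$ is $\vm_i$, which has $k$ copies of it, so $\vvc$ contains $k\cdot k^{i-1}=k^i$ instances of $d_i$; the only monomer carrying $d_i^*$ is $\vm_{i+1}$, which has one copy, so $\vvc$ contains $k^i$ instances of $d_i^*$. The two counts agree for every $i$, and the bipartite graph restricted to the $d_i$/$d_i^*$ instances is complete bipartite, so $\alpha$ is a legitimate configuration, it is saturated (every domain instance is matched), it is connected, and it has $\sum_{j=0}^{n-1}k^j=\frac{k^n-1}{k-1}$ monomers.

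\textbf{Stability via a rigidity lemma.} The crux is to show $\alpha$ is stable, i.e.\ that \emph{every} saturated configuration consists of a single polymer. Let $\gamma$ be any saturated configuration and $C$ any of its connected components; write $c_i$ for the number of copies of $\vm_i$ inside $C$. Because binding edges never cross components and, by the count argument above, every domain instance of $\gamma$ is matched, inside $C$ the number of $d_i$ instances equals the number of $d_i^*$ instances; the former is $k c_i$ and the latter is $c_{i+1}$, so $c_{i+1}=k c_i$ for $1\le i\le n-1$, hence $c_i=k^{i-1}c_1$. Summing $c_1$ over all components of $\gamma$ gives the total number of copies of $\vm_1$ in $\vvc$, which is $1$; therefore exactly one component has $c_1=1$, and any other component would have $c_i=0$ for all $i$ and thus be empty, a contradiction. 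So $S(\gamma)=1$ for every saturated $\gamma$; since $\alpha$ is itself saturated it is stable, and its single polymer has size $\frac{k^n-1}{k-1}$, establishing the theorem.

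\textbf{Main obstacle.} I expect the only real work to be the rigidity lemma in the last paragraph, which hinges on two structural facts: equal global counts of $d_i$ and $d_i^*$ force a perfect matching in any saturated configuration, and $d_i$ lives only on $\vm_i$ while $d_i^*$ lives only on $\vm_{i+1}$, which makes the per-component counts obey the clean recurrence $c_{i+1}=kc_i$. Everything else (the values of $|\calD|$ and $|\calM|$, saturation and connectedness of $\alpha$, and evaluating the geometric sum) is routine.
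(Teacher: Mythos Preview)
Your proposal is correct and uses the same construction as the paper. The paper's proof is terser: after defining the same $\calD$, $\calM$, and $\vvc$, it simply asserts that $[\vvc]$ has a unique saturated configuration up to isomorphism (the complete $k$-ary tree) and therefore that configuration is stable. Your rigidity argument---matching per-layer domain counts, deducing that every saturated configuration has all domains bound, and then deriving the per-component recurrence $c_{i+1}=kc_i$ to force $S(\gamma)=1$---is a clean way to justify what the paper leaves as an observation, and it directly establishes stability without needing the stronger uniqueness-up-to-isomorphism claim.
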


\begin{proof}
  An example of $\calT$ for $n=4$ and $k=2$ is shown in Fig.~\ref{fig:tree-polymer}.
  Let $\calD = \{d_1,\ldots,d_n\}$ and $\calM = \{\vm_1,\ldots,\vm_n\}$, where, for each $j \in \{2,\ldots,n-1\}$, $\vm_j = \{d_{j-1}^*, k\cdot d_{j}\}$ (i.e., 1 copy of $d_{j-1}^*$ and $k$ copies of $d_{j}$), $\vm_1 = \{k \cdot d_1\}$, and $\vm_n = \{d_{n-1}^*\}$.
  Define $\vvc\in\N^\calM$ by $\vvc(\vm_j) = k^{j-1}$ for $j \in \ints{n}.$
  Then $\|\vvc\| = \sum_{j=1}^{n} k^{j-1} = \frac{k^n-1}{k-1}$.
  Observe that $[\vvc]$ has a unique (up to isomorphism) saturated configuration $\alpha$ (which is therefore stable), described by a complete $k$-ary tree: level $j \in \ints{n-1}$ of the tree is composed of $k^{j-1}$ copies of $\vm_j$, each bound to $k$ children of type $\vm_{j+1}$ in level $j+1$.
\end{proof}

\begin{figure}[t]
\minipage{0.475\textwidth}
\includegraphics[width=\linewidth]{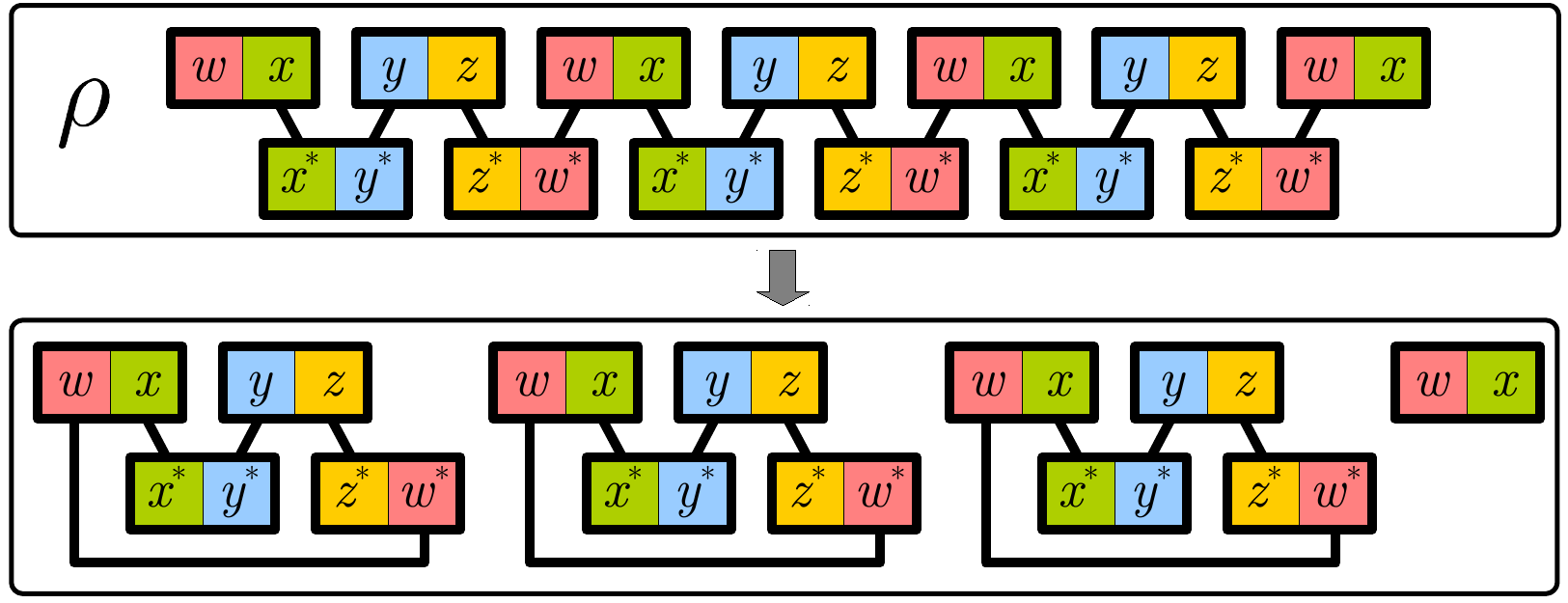}
\caption{
A polymer $\rho$ composed of several copies of four monomer types,
which is not stable since it can be broken into several smaller polymers (bottom panel) such that all domains are bound.
}  
\label{fig:long-polymer-breaking}
\endminipage\hfill
\minipage{0.475\textwidth}
\includegraphics[width=\linewidth]{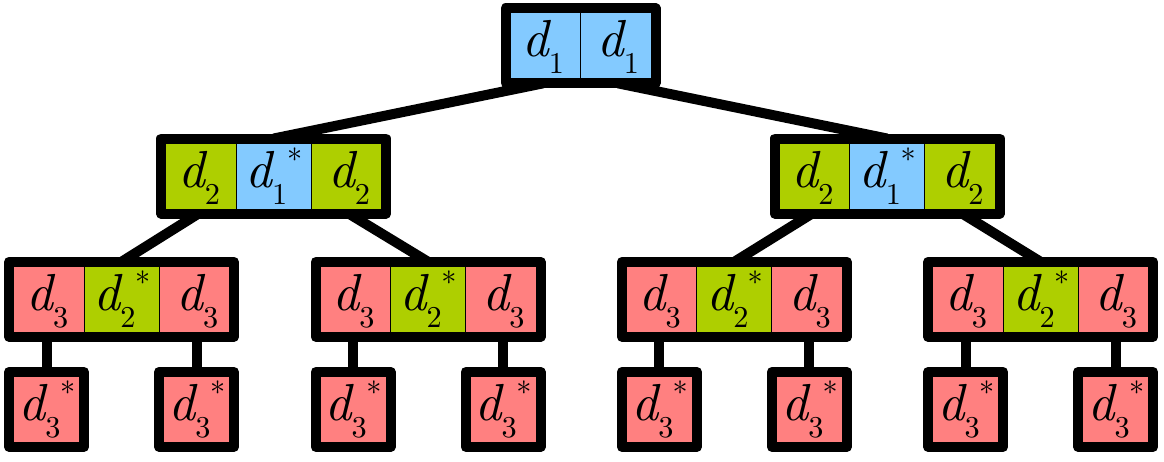}
\caption{
An example of a TBN from Theorem~\ref{thm:polymer-lower-bound-tree} for $n=4$ and $k=2$.
}
\label{fig:tree-polymer}
\endminipage
\end{figure}

The remainder of Section~\ref{sec:polymer-bounds} is devoted to proving that no stable polymer $\rho$ can have size \emph{more} than exponential in $|\calD|$ and $|\calM|$.
\opt{arxiv_version}{We note that this is easy to prove in the special case if the monomer binding graph $G_\rho$ is a tree,
such as the TBN in Theorem~\ref{thm:polymer-lower-bound-tree}  in Appendix~\ref{app:acyclic-self-assembly}.}

\subsection{A linear algebra framework}

We prove Theorem~\ref{thm:polymer-upper-bound},
the main result of Section~\ref{sec:polymer-bounds},
by viewing TBNs from a linear algebra perspective.
Let $\calT=(\calD, \calM)$  be a TBN,
with $\calD = \{d_1,\ldots,d_d\}$
and $\calM = \{\vm_1,\ldots,\vm_m\}$.
For a matrix $\vA$, let $\vA(i,j)$ denote the entry in the $i$'th row and $j$'th column.
Define the $d \times m$ $\emph{positive monomer matrix}$ $\vM_\calT^+$ of $\calT$ by $\vM_\calT^+(i,j) = \vm_j(d_i)$. 
Define the $d \times m$ $\emph{negative monomer matrix}$ $\vM_\calT^-$ of $\calT$ by $\vM_\calT^-(i,j) = \vm_j(d_i^*)$. 
Define the $d \times m$ \emph{monomer matrix} $\vM_\calT$ of $\calT$ to be $\vM_\calT = \vM_\calT^+ - \vM_\calT^-$.
Note that $\vM_\calT^+$ and $\vM_\calT^-$ are matrices over $\N$, but $\vM_\calT$ is over~$\Z$.

The rows of the monomer matrix $\vM_\calT$ correspond to domain types and the columns correspond to monomer types.
The mapping from a TBN $\calT$ to a monomer matrix $\vM_\calT$ is not 1-1:
$\vM_\calT(i,j)$ is the number of $d_i$ domains minus the number of $d_i^*$ domains in monomer type $\vm_j$, which would be the same, for instance, for monomer types $\vm_1 = \{d_1,d_3\}$ and $\vm_2 = \{d_1,d_1,d_1^*,d_3\}$.
Let $\vvc$ be a monomer collection and let $\vd = \vM_\calT \vvc \in \N^d$;
for $i \in \ints{d}$, $\vd(i)$ is the number of $d_i$ domains minus the number of $d_i^*$ domains in the whole monomer collection $\vvc$.

Let $\alpha \in [\vvc]$ be saturated;
$\alpha$ can only have a domain $d_i$ unbound if all copies of its complement $d_i^*$ are bound, and vice versa.
If $\vd(i) > 0$, in $\alpha$
there is an excess of $d_i$ domains, and all $d_i^*$ domains are bound.
If $\vd(i) < 0$, in $\alpha$
there is an excess of $d^*_i$ domains, and all $d_i$ domains are bound.
This leads to the following observation.

\begin{observation} \label{obs:excess-domains-same-across-saturated-configs}
  Let $\calT=(\calD,\calM)$ be a TBN and $\vvc\in\N^\calM$ a monomer collection.
  Let $\vd = \vM_\calT \vvc$.
  Then for every configuration $\alpha \in [\vvc]$, $\alpha$ is saturated if and only if,
  for all $i \in \ints{d}$, if $\vd(i) \geq 0$ (respectively, if $\vd(i) \leq 0$),
  then $\vd(i)$ is the number of unbound $d_i$ (resp., $d_i^*$) domains in $\alpha$.
\end{observation}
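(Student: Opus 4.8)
The plan is to reduce the claim to a per--domain--type count of binding edges. First I would fix $i \in \ints{d}$ and set up notation for $\vvc$: let $p_i = \sum_{j=1}^{m} \vM_\calT^+(i,j)\,\vvc(j)$ be the total number of domains of type $d_i$ occurring across all monomers of $\vvc$, and let $n_i = \sum_{j=1}^{m} \vM_\calT^-(i,j)\,\vvc(j)$ be the total number of codomains of type $d_i^*$; by definition of $\vM_\calT = \vM_\calT^+ - \vM_\calT^-$ we get $\vd(i) = p_i - n_i$. Given a configuration $\alpha \in [\vvc]$, the binding edges incident to type-$d_i$ domains form a matching between the $d_i$ domains and the $d_i^*$ domains (by the like--unlike binding assumption a $d_i$ can bind only a $d_i^*$); let $b_i$ be the cardinality of this matching. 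Then the number of unbound $d_i$ domains in $\alpha$ is exactly $p_i - b_i$, and the number of unbound $d_i^*$ domains is $n_i - b_i$.

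Next I would observe that saturation is a type-local condition: $\alpha$ is saturated iff for every $i$ it is not the case that some $d_i$ and some $d_i^*$ are both unbound, i.e.\ $p_i - b_i = 0$ or $n_i - b_i = 0$, equivalently $b_i = \min(p_i, n_i)$. Hence it suffices to prove the biconditional for a single index $i$ and then conjoin over all $i \in \ints{d}$.

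The core is then a two-case split on the sign of $\vd(i)$. If $\vd(i) \geq 0$ then $p_i \geq n_i$, so $\min(p_i,n_i) = n_i$; thus ``$\alpha$ has no unbound pair of type $i$'' is equivalent to $b_i = n_i$, which is equivalent to the number of unbound $d_i$ domains being $p_i - n_i = \vd(i)$. The case $\vd(i) \leq 0$ is symmetric, swapping the roles of $d_i$ and $d_i^*$; when $\vd(i) = 0$ both statements hold at once and assert that all domains of both types are bound. Conjoining over $i$ yields the stated equivalence.

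I do not expect a genuine obstacle: this is essentially unwinding the definitions of saturation and of $\vM_\calT$. The only points needing care are stating explicitly that an unbound pair is decided one domain type at a time (so ``saturated'' factors across $i$), and noting that the per-type binding count $b_i$, though it can vary between configurations, is pinned to $\min(p_i,n_i)$ precisely in the saturated case; the arithmetic ($\vd(i) = p_i - n_i$, unbound $d_i$ count $= p_i - b_i$) is routine.
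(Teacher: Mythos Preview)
Your proposal is correct and matches the paper's approach: the paper states this as an observation with only a brief informal justification (noting that saturation forces all $d_i^*$ to be bound when $\vd(i)>0$ and vice versa), and your argument is precisely the careful unwinding of that remark via the per-type matching count $b_i=\min(p_i,n_i)$. There is nothing to add beyond noting that in the $\vd(i)\le 0$ case the count of unbound $d_i^*$ domains is $-\vd(i)$ rather than $\vd(i)$ as literally written in the statement.
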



Let $\calT = (\calD, \calM)$ and $\calT' = (\calD, \calM')$ be TBNs with the same set of domain types.
Then we call $\calT'$ a \emph{relabeling} of $\calT$ if there exists a subset $D \subseteq \calD$ such that $\calM'$ can be obtained from $\calM$ by starring any instance of $d_i \in D$ in $\calM$ and unstarring any instance of $d_i^*$ in $\calM$.
Since this corresponds to negating the $i$'th row of $\vM_\calT$, which negates the $i$'th entry of the vector $\vd = \vM_\calT \vvc$, this gives the following observation.

\begin{observation} \label{obs:Apos}
Let $\calT=(\calD,\calM)$ be a TBN and $\vvc\in\N^\calM$ a monomer collection.
There exists a relabeling $\calT'$ of $\calT$ so that $\vM_{\calT'} \vvc \geq 0$.
\end{observation}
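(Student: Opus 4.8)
The plan is to read off the relabeling directly from the sign pattern of the vector $\vd = \vM_\calT \vvc \in \Z^d$. First I would form $\vd = \vM_\calT\vvc$; by the discussion preceding the statement, $\vd(i)$ is the number of $d_i$ domains minus the number of $d_i^*$ domains across all monomers of $\vvc$. Then I would simply set $D = \{\, d_i \in \calD : \vd(i) < 0 \,\}$ and let $\calT' = (\calD,\calM')$ be the relabeling of $\calT$ determined by this $D$ (star every instance of $d_i$, and unstar every instance of $d_i^*$, for each $d_i \in D$).

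Next I would invoke the remark immediately before the statement: relabeling by $D$ negates precisely the rows of $\vM_\calT$ indexed by $D$, hence the $i$'th coordinate of $\vM_{\calT'}\vvc$ equals $\vd(i)$ when $d_i \notin D$ and equals $-\vd(i)$ when $d_i \in D$. By the choice of $D$, in the first case $\vd(i) \geq 0$ and in the second case $\vd(i) < 0$, so $-\vd(i) > 0$. Thus every coordinate of $\vM_{\calT'}\vvc$ is nonnegative, i.e. $\vM_{\calT'}\vvc \geq 0$, which is exactly the claim.

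I do not expect a genuine obstacle here; this is essentially a bookkeeping consequence of the definition of relabeling (why it is stated as an Observation). The only points worth a sentence of care are: (a) that starring/unstarring domain instances always yields a legitimate TBN, since it just produces another finite multiset over $\calD \cup \calD^*$ for each monomer; and (b) that negating row $i$ of $\vM_\calT = \vM_\calT^+ - \vM_\calT^-$ is exactly the effect of swapping, in each monomer, the count of $d_i$ with the count of $d_i^*$. Both are immediate from the definitions, so I would keep the write-up to just a few lines.
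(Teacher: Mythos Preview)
Your proposal is correct and matches the paper's own justification: choose $D$ to be the set of $d_i$ with $\vd(i)<0$, note that relabeling by $D$ negates exactly those rows of $\vM_\calT$ (hence those entries of $\vd$), and conclude $\vM_{\calT'}\vvc\geq 0$. The paper treats this as immediate from the sentence preceding the observation, so your write-up is simply a slightly more explicit version of the same argument.
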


Combining Observations~\ref{obs:excess-domains-same-across-saturated-configs} and~\ref{obs:Apos} results in the following observation,
which essentially states that for any given monomer collection $\vvc$,
we may assume without loss of generality that domains unbound in saturated configurations $\alpha\in[\vvc]$ are all primary domain types.

\begin{observation} \label{obs:Apos-and-excess-saturated}
  Let $\calT=(\calD,\calM)$ be a TBN and $\vvc\in\N^\calM$ a monomer collection.
  There exists a relabeling $\calT'$ of $\calT$ so that,
  letting $\vd = \vM_{\calT'} \vvc$,
  for all configurations $\alpha \in [\vvc]$,
  $\alpha$ is saturated if and only if, 
  for all $i \in \ints{d}$,
  $\vd(i) \in \N$ is the number of unbound primary domains of type $d_i \in \calD$ in $\alpha$.
\end{observation}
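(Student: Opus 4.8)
The plan is to derive this observation by directly composing Observation~\ref{obs:Apos} with Observation~\ref{obs:excess-domains-same-across-saturated-configs}; no new machinery is required, only some bookkeeping about what relabeling does (and does not) change.

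First I would invoke Observation~\ref{obs:Apos} to fix a relabeling $\calT' = (\calD,\calM')$ of $\calT$ with $\vd := \vM_{\calT'}\vvc \geq 0$, so that $\vd(i) \in \N$ for every $i \in \ints{d}$. Before going further I would record the (routine) fact that relabeling does not change the objects over which the statement quantifies: relabeling only swaps, for the $i$ in the chosen set $D \subseteq \calD$, which copies of a domain are deemed ``primary'' ($d_i$) and which are deemed ``complementary'' ($d_i^*$), while leaving invariant the set of unordered pairs (primary copy, complementary copy of the same type) that may bind. Hence the bipartite graph of possible bonds is unchanged up to exchanging its two sides on the rows indexed by $D$, the set of bond assignments is the same, and $[\vvc]$ (as a set of graphs on the domains of $\vvc$) is literally the same for $\calT$ and $\calT'$; in particular $\alpha \in [\vvc]$ is saturated with respect to $\calT$ iff it is saturated with respect to $\calT'$, and ``unbound primary domain of type $d_i$ in $\calT'$'' merely renames ``unbound domain copy of the $i$-th type'' when $i \in D$.

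Next I would apply Observation~\ref{obs:excess-domains-same-across-saturated-configs} to $\calT'$ and $\vvc$ with this $\vd$. Since $\vd(i) \geq 0$ for all $i$, the ``$\vd(i) \leq 0$'' branch there is nonvacuous only when $\vd(i)=0$, and in that case it is already implied by the ``$\vd(i) \geq 0$'' branch: if every primary copy of the $i$-th type is bound and the numbers of primary and complementary copies of that type coincide (which is what $\vd(i)=0$ says), then every complementary copy is bound as well. So Observation~\ref{obs:excess-domains-same-across-saturated-configs}, applied to $\calT'$, specializes to exactly the assertion that $\alpha \in [\vvc]$ is saturated if and only if, for all $i \in \ints{d}$, $\vd(i)$ equals the number of unbound primary domains of type $d_i$ in $\alpha$, which is the claim.

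I do not anticipate a genuine obstacle: all the content lives in the two cited observations, and the only care needed is the remark that relabeling preserves $[\vvc]$ and the saturation predicate, together with the trivial $\vd(i)=0$ reconciliation noted above. If anything, the ``hard'' part is purely expository---making sure the reader is convinced that passing to $\calT'$ is harmless.
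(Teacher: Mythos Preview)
Your proposal is correct and matches the paper's approach exactly: the paper states the observation as the direct combination of Observations~\ref{obs:excess-domains-same-across-saturated-configs} and~\ref{obs:Apos} without further proof, and you carry out precisely that combination, supplying the (routine) bookkeeping the paper leaves implicit.
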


The following lemma is a key technical tool for showing that a polymer is not stable (or equivalently that a stable configuration has entropy greater than~1 and therefore cannot be a single polymer).
It generalizes the idea shown in Fig.~\ref{fig:long-polymer-breaking} that if one can find a monomer subcollection $\vvc_1$ in a larger collection $\vvc$,
and $\vvc_1$ has a saturated configuration with \emph{no} bonds left unbound,
then one can create a saturated configuration $\gamma \in [\vvc]$ with no bonds between $\vvc_1$ and $\vvc - \vvc_1$.
(Thus~$\gamma$ has at least two polymers.)

More generally,
given a monomer collection $\vvc$ with at least as many $d_i$ as $d_i^*$ domains (under appropriate relabeling this holds for each $i$ by Observation~\ref{obs:Apos}),
if we can partition $\vvc$ into subcollections $\vvc_1$ and $\vvc_2$,
and each of them \emph{also} has at least as many $d_i$ as $d_i^*$ domains for each $i \in \ints{d}$,
then every stable configuration $\alpha \in [\vvc]_\Box$ has at least two polymers,
since there is a saturated configuration of $\vvc$ in which there are no bonds between $\vvc_1$ and $\vvc_2$.\footnote{Observations~\ref{obs:excess-domains-same-across-saturated-configs},~\ref{obs:Apos}, and~\ref{obs:Apos-and-excess-saturated} are not really \emph{necessary} for our technique, but simplify the description of the conditions under which $\vvc_1$ and $\vvc_2$ would be saturated:
specifically, that if $\vd = \vM_\calT \vvc$ is in the nonnegative orthant, then so are $\vd_1 = \vM_\calT \vvc_1$ and $\vd_2 = \vM_\calT \vvc_2$.
If we did not use relabeling
(thus could not guarantee that $\vd$ is in the nonnegative orthant)
then the requisite condition to apply Lemma~\ref{lem:split} would be that $\vd$, $\vd_1$, and $\vd_2$ all occupy the same orthant;
i.e., for all $i \in \ints{d}$, if any of $\vd(i)$, $\vd_1(i)$, or $\vd_2(i)$ are negative, then the other two are not positive.}

\begin{lemma} \label{lem:split}
Let $\calT=(\calD,\calM)$ be a TBN,
let $\vvc \in \N^\calM$ be a monomer collection of $\calT$ such that $\vM_\calT \vvc \geq \vec{0}$,
and let $\alpha \in [\vvc]_\Box$ be a stable configuration.
If there exist nonempty subcollections $\vvc_1,\vvc_2\in\N^\calM$ where
1) $\vvc_1 + \vvc_2 = \vvc$ and
2) $\vM_\calT \vvc_1 \geq 0$ and $\vM_\calT \vvc_2 \geq 0$,
then $S(\alpha) > 1$.
\end{lemma}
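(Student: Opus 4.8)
The plan is to build an explicit saturated configuration $\gamma \in [\vvc]$ with at least two polymers, which will immediately give $S(\alpha) \geq S(\gamma) \geq 2 > 1$ by stability of $\alpha$. The idea is to saturate $\vvc_1$ and $\vvc_2$ \emph{separately}, with no binding edges crossing between the two subcollections, and then argue the result is still saturated when viewed as a configuration of all of $\vvc$.

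First I would invoke Observation~\ref{obs:excess-domains-same-across-saturated-configs} (or its cleaner form after relabeling) applied to the sub-TBN monomer collections $\vvc_1$ and $\vvc_2$: since $\vM_\calT \vvc_1 \geq \vec 0$, inside $\vvc_1$ alone there are at least as many $d_i$ domains as $d_i^*$ domains for every $i$, so there is a matching of the bipartite "possible bonds" graph on $\vvc_1$ that covers every $d_i^*$ domain; call the resulting saturated configuration $\alpha_1 \in [\vvc_1]$. Similarly build a saturated $\alpha_2 \in [\vvc_2]$ using $\vM_\calT \vvc_2 \geq \vec 0$. Now let $\gamma = \alpha_1 \cup \alpha_2$ be the configuration of $\vvc$ whose bond assignment is the disjoint union of the two matchings (legal since $\vvc_1 + \vvc_2 = \vvc$ partitions the monomers, hence the domains, of $\vvc$). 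The key point to verify is that $\gamma$ is saturated \emph{as a configuration of $\vvc$}: a configuration is saturated iff it has no pair $d, d^*$ both unbound. In $\gamma$, the unbound domains are exactly the union of the unbound domains of $\alpha_1$ and those of $\alpha_2$. By construction each $\alpha_j$ leaves only primary domains $d_i$ unbound (no $d_i^*$ unbound), so $\gamma$ leaves only primary domains unbound — in particular no unbound $d_i^*$ exists, so no bad pair $d_i, d_i^*$ can occur. Hence $\gamma$ is saturated.

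Finally, since $\vvc_1$ and $\vvc_2$ are both nonempty, $\gamma$ has no binding edges between the $\vvc_1$-monomers and the $\vvc_2$-monomers, and also no monomer edges between them (monomer edges stay within a single monomer, hence within one side of the partition), so the monomer binding graph $G_\gamma$ is disconnected: every connected component lies entirely in $\vvc_1$ or entirely in $\vvc_2$, and each side contributes at least one component. Therefore $S(\gamma) \geq 2$. Because $\alpha$ is stable and $\gamma$ is saturated, $S(\alpha) \geq S(\gamma) \geq 2 > 1$, establishing the lemma.

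The only subtle step is the claim that saturating $\vvc_1$ and $\vvc_2$ separately yields something globally saturated; I expect the main obstacle is making the relabeling bookkeeping clean here. One must either apply Observation~\ref{obs:Apos-and-excess-saturated} to each of $\vvc_1,\vvc_2$ (legitimate since $\vM_\calT \vvc_j \geq \vec 0$ already puts each $\vd_j$ in the nonnegative orthant, so no new relabeling is even needed), or argue directly with Observation~\ref{obs:excess-domains-same-across-saturated-configs} that the hypothesis $\vM_\calT\vvc_j \geq \vec 0$ forces every saturated configuration of $\vvc_j$ to leave only primary domains unbound; combined with the \emph{existence} of a saturated configuration of $\vvc_j$ (always true — greedily pair complementary domains within $\vvc_j$ until none can be paired), this gives the $\alpha_j$ we need. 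Everything else is routine graph-theoretic verification.
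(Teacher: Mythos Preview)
Your proposal is correct and follows essentially the same approach as the paper: pick saturated $\alpha_1\in[\vvc_1]$, $\alpha_2\in[\vvc_2]$, set $\gamma=\alpha_1\cup\alpha_2$, verify $\gamma$ is saturated (the paper does this via $\vd_1+\vd_2=\vd$ and Observation~\ref{obs:excess-domains-same-across-saturated-configs}, you via ``only primary domains are left unbound''---equivalent under the hypothesis $\vM_\calT\vvc_j\geq\vec 0$), and conclude $S(\alpha)\geq S(\gamma)\geq 2$ by stability. Your extra remarks on relabeling and on the existence of saturated $\alpha_j$ are sound and just make explicit what the paper leaves implicit.
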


\opt{arxiv_version}{The proof of Lemma~\ref{lem:split} appears in Appendix~\ref{app:split}.}
\opt{LNCS_version}{The proof of Lemma~\ref{lem:split} appears in the full version of this paper.}

\subsection{Exponential upper bound on polymer size}

We now show a converse to Theorem~\ref{thm:polymer-lower-bound-tree},
namely Theorem~\ref{thm:polymer-upper-bound}, showing that stable polymers have size at most exponential in the number of domain and monomer types.
The proof of Theorem~\ref{thm:polymer-upper-bound} closely follows Papadimitriou's proof that integer programming is contained in $\NP$~\cite{papadimitriou1981complexity}.
That proof shows,
for any linear system $\vA \vx = \vb$,
where $\vA$ is a given $n \times m$ integer matrix, $\vb \in \Z^n$ is a given integer vector,
and $\vx$ represents the $m$ unknowns,
that if the system has a solution $\vx \in \N^m$, then it has a ``small'' solution $\vx' \in \N^m$.
``Small'' means that $\amax(\vx')$ is at most exponential in $n+m+\amax(\vA)+\amax(\vb)$.
The technique of~\cite{papadimitriou1981complexity} proceeds by showing that any sufficiently large solution $\vx \in \N^m \setminus \{\vec{0}\}$ can be split into two vectors $\vx_1,\vx_2\in\N^m \setminus \{\vec{0}\}$ such that $\vx_1+\vx_2=\vx$,
where $\vA \vx_1 = \vec{0}$, so $\vx_2$ is also a solution:
$\vA \vx_2 = \vA (\vx-\vx_1) = \vA \vx - \vA \vx_1 = \vA \vx = \vb$.
This is useful because $\vx_1$ and $\vx_2$ satisfy the hypothesis of Lemma~\ref{lem:split},
which tells us that all stable configurations $\alpha \in [\vx]$ obey $S(\alpha)>1$,
so any single-polymer configuration of $\vx$ is not stable.

We include the full proof for three reasons:
1) self-containment,
2) it requires a bit of care to convert our inequality $\vA \vx \geq \vec{0}$ into an equality as needed for the technique,\footnote{In particular, the proof of~\cite{papadimitriou1981complexity} upper bounds the size of $\vx$ in terms of the entries of both $\vA$ and $\vb$. However, the na\"{i}ve way to solve a linear inequality $\vA \vx \geq \vec{0}$ using an equality, by introducing slack variables $\vb$ and asking for solutions $\vx\in\N^m$, $\vb\in\N^n$ such that $\vA \vx = \vb$, allows for the possibility that $\|\vb\|$ is very large compared to $\|\vA\|$, in which case upper bounding $\|\vx\|$ in terms of both $\vA$ and $\vb$ does not help to bound $\|\vx\|$ in terms of $\vA$ alone.} 
and
3) although the proof of~\cite{papadimitriou1981complexity} is sufficiently detailed to prove our theorem, the statement of the theorem in~\cite{papadimitriou1981complexity} hides the details about splitting the vector, which are crucial to obtaining our result.

We require the following discrete variant of Farkas' Lemma, also proven in~\cite{papadimitriou1981complexity}.

\newcommand{\sfT}{\mathsf{T}}

\begin{lemma}[\cite{papadimitriou1981complexity}]\label{lem:farkas}
  Let $a,d,l\in\Z^+$, $\vv_1,\ldots,\vv_l \in \{0,\pm 1,\ldots,\pm a\}^d$, 
  and $K = (ad)^{d+1}$.
  Then exactly one of the following statements holds:
  \begin{enumerate}
      \item\label{lem:farkas:balanced} There exist $l$ integers $n_1,\ldots,n_l \in \{0,1,\ldots,K\}$, not all $0$, such that \newline$\sum_{j=1}^l n_j \vv_j = \vec{0}.$

      \item\label{lem:farkas:one-side} There exists a vector $\vh \in \{0,\pm 1,\ldots,\pm K\}^d$ such that, for all $j \in \ints{l}$, $\vh^\sfT \cdot \vv_j \geq 1$.
  \end{enumerate}
\end{lemma}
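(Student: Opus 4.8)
The overall strategy is to establish the two statements are mutually exclusive (easy), then show at least one holds (the substance), using a bounded LP-duality / pigeonhole argument. First I would dispatch exclusivity: if both held, then $0 = \vh^\sfT \cdot \vec{0} = \vh^\sfT \cdot \sum_j n_j \vv_j = \sum_j n_j (\vh^\sfT \cdot \vv_j) \geq \sum_j n_j \geq 1$, a contradiction. So the real work is showing that the \emph{failure} of statement~\ref{lem:farkas:balanced} forces statement~\ref{lem:farkas:one-side}.

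The plan for that direction is to pass through the reals. Suppose no nontrivial nonnegative combination of the $\vv_j$ vanishes with small integer coefficients in $\{0,\ldots,K\}$; I would first argue that in fact no nontrivial \emph{nonnegative real} combination vanishes (if $\sum \lambda_j \vv_j = \vec 0$ with $\lambda_j \geq 0$ not all zero, then by Carath\'eodory / solving a rational linear system one gets a rational, hence after clearing denominators an integer, solution, and then one must bound its size: this is where the constant $K = (ad)^{d+1}$ enters, via Cramer's rule applied to a $d\times d$ subsystem with entries bounded by $a$, giving determinants and cofactors bounded by $d! \, a^d \le (ad)^d < K$). Once I know $\vec 0$ is not in the convex cone generated by $\{\vv_1,\dots,\vv_l\}$ as a nontrivial combination — equivalently, the only solution to $\sum \lambda_j \vv_j = 0$, $\lambda \geq 0$ is $\lambda = 0$ — a separating-hyperplane / LP-duality argument yields a real vector $\vg \in \R^d$ with $\vg^\sfT \cdot \vv_j > 0$ for all $j$, which can be rescaled so that $\vg^\sfT \cdot \vv_j \geq 1$ for all $j$.

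The final step is to convert this real certificate $\vg$ into an \emph{integer} certificate $\vh$ with entries bounded by $K$. Here I would set up the polyhedron $P = \{ \vg \in \R^d : \vg^\sfT \cdot \vv_j \geq 1 \ \forall j \in \ints{l}\}$, which we have just shown is nonempty, and invoke the standard fact that a nonempty rational polyhedron defined by inequalities with coefficients bounded by $a$ (the $\vv_j$) and right-hand sides bounded by $1$ has a vertex (or a point on a minimal face) whose coordinates are ratios of subdeterminants of the constraint matrix — again bounded in absolute value by $(ad)^d$. Rounding is \emph{not} automatically safe for an integrality requirement, so the cleanest route is: take such a rational point $\vg^\ast$ with bounded denominators, clear the common denominator $q$ (itself a bounded subdeterminant), and note $\vh = q\,\vg^\ast$ satisfies $\vh^\sfT \cdot \vv_j \geq q \geq 1$ and $\amax(\vh) \le q \cdot \amax(\vg^\ast) \le (ad)^{d} \cdot (ad)^{d}$; a slightly more careful determinant bookkeeping (as in Papadimitriou) tightens this to $K = (ad)^{d+1}$.

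The main obstacle I anticipate is the quantitative bounding in this last conversion: getting the bound to land at exactly $(ad)^{d+1}$ rather than a cruder $(ad)^{2d}$ requires being careful about \emph{which} determinant bound to use (bounding a $d\times d$ determinant with $\pm a$ entries by $d^{d/2} a^d$ via Hadamard, or by $d!\,a^d$, and tracking the extra factor from clearing denominators), and about reducing to a square subsystem of the right size. Since this lemma is quoted verbatim from~\cite{papadimitriou1981complexity}, I would in practice cite it rather than re-derive the exact constant, but the skeleton above (exclusivity by pairing; LP-duality / separating hyperplane to get a real certificate; Cramer-bounded rounding to get the integer certificate with the stated $K$) is the proof.
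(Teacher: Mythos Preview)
The paper does not prove this lemma at all; it simply states it with the citation~\cite{papadimitriou1981complexity} and uses it as a black box in the proof of Theorem~\ref{thm:polymer-upper-bound}. Your proposal therefore goes beyond what the paper does. Your sketch---mutual exclusivity by pairing the two certificates, then (assuming~\ref{lem:farkas:balanced} fails) passing to a real separating hyperplane via LP duality and discretizing via Cramer-type subdeterminant bounds---is indeed the standard argument from~\cite{papadimitriou1981complexity}, and your own closing remark that one would ``in practice cite it rather than re-derive the exact constant'' is exactly what the paper does.
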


Intuitively, statement~\eqref{lem:farkas:balanced} of Lemma~\ref{lem:farkas} states that the vectors can be added to get $\vec{0}$ (they are ``directions of balanced forces''~\cite{papadimitriou1981complexity}).
This is false if and only if statement~\eqref{lem:farkas:one-side} holds:
the vectors all lie on one side of some hyperplane, whose orthogonal vector $\vh$ would then have positive dot product with each of the vectors $\vv_j$ (thus adding any of them would move positively in the direction $\vh$ and could never cancel to get $\vec{0}$).

Intuitively, Theorem~\ref{thm:polymer-upper-bound} states that the size of polymers in stable configurations is upper bounded by a function which is exponential in $d$.  We prove this by first defining a constant $K$ which is exponential in $d$.  If each of the $m$ individual monomer counts is less than $K$, then we are done since no polymer in the configuration can have size bigger than $mK$.  If some of the monomer counts are greater than $K$ (call these \emph{large-count monomers}), we consider two cases.  

For the first case, we consider the scenario where the vectors which describe the monomer types with large monomer counts are such that they can ``balance'' each other out with relatively small linear combination coefficients.  If this is the case, then we can make a saturated subconfiguration which has at least one polymer using these small linear combination coefficients and large-count monomer types since the domains and codomains completely ``balance'' each other out.  We can then use the rest of the counts of the configuration to make another saturated subconfiguration which has at least one polymer.  This is shown mathematically by applying Lemma~\ref{lem:farkas} to show that the monomer counts in the polymer can be split to find a configuration consisting of two separate saturated polymers.  This means that there is a saturated configuration that has at least two polymers which contradicts the assumption $\alpha$ is a single stable polymer.

If there exist no such linear combination to ``balance out'' out the vectors describing the large-count monomers, then Lemma~\ref{lem:farkas} tells us all of these vectors lie on the same side of some hyperplane.  In this case, we show that counts of the small-count monomers play a role in bounding the counts of the large-count monomers.  Intuitively, if all of the vectors describing the large-count monomers lie on the same side of some hyperplane, they are missing domains and codomains which will allow them to bind together.  The domains and codomains they need in order to bind together, then must be found on the small-count monomer.  Consequently, this means the size of polymers will be bound by the counts of small-count monomers (which is exponential in $K$). \opt{LNCS_version}{The proof appears in the full version of this paper.}

\opt{arxiv_version}{The proof of the following theorem is in Appendix~\ref{app:proof-polymer-upper-bound}.}

\begin{theorem}\label{thm:polymer-upper-bound}
Let $\calT=(\calD,\calM)$ be a TBN with $d=|\calD|$ and $m=|\calM|$.
Let $a=\max\limits_{\vm \in \calM, d_i \in \calD\cup\calD^*} \vm(d_i)$ be the maximum count of any domain in any monomer.
Then all polymers of every stable configuration $\alpha$ of $\calT$ have size at most
$2 (m+d) (ad)^{2d+3}.$
\end{theorem}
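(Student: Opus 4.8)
The plan is to follow Papadimitriou's integer-programming argument, specialized to the inequality $\vM_\calT \vvc \geq \vec 0$ and combined with Lemma~\ref{lem:split}. First I would fix an arbitrary stable configuration $\alpha$ and an arbitrary polymer $\rho$ inside it; let $\vvc$ be the monomer collection underlying $\rho$ (so $|\rho|=\|\vvc\|$), and WLOG by Observation~\ref{obs:Apos} relabel so that $\vd := \vM_\calT \vvc \geq \vec 0$. Set $K = (ad)^{d+1}$ (the constant from Lemma~\ref{lem:farkas}). Call monomer type $\vm_j$ \emph{small-count} if $\vvc(\vm_j) \leq K$ and \emph{large-count} otherwise. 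If there are no large-count monomers, then $\|\vvc\| \leq mK = m(ad)^{d+1}$, which is well below the claimed bound, and we are done. So assume some large-count monomers exist, with columns $\vc_1,\ldots,\vc_l$ of $\vM_\calT$ (each entry in $\{0,\pm1,\ldots,\pm a\}$, so the hypotheses of Lemma~\ref{lem:farkas} hold with this $a$ and $d$).

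Now apply Lemma~\ref{lem:farkas} to $\vc_1,\ldots,\vc_l$. In case~\eqref{lem:farkas:balanced}, there are integers $n_1,\ldots,n_l \in \{0,\ldots,K\}$, not all zero, with $\sum_j n_j \vc_j = \vec 0$. Since each large-count monomer appears at least $K+1 > n_j$ times, the vector $\vvc_1$ that takes $n_j$ copies of the $j$-th large-count monomer (and nothing else) satisfies $\vec 0 \neq \vvc_1 \leq \vvc$ and $\vM_\calT \vvc_1 = \vec 0 \geq \vec 0$; then $\vvc_2 := \vvc - \vvc_1$ is also nonempty with $\vM_\calT \vvc_2 = \vd - \vec 0 = \vd \geq \vec 0$. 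Lemma~\ref{lem:split} then forces $S(\alpha) > 1$, contradicting that $\alpha$ is a single-polymer stable configuration — wait, more precisely $\alpha$ need not be a single polymer, but $\rho$ is a connected component, and the contradiction is with Lemma~\ref{lem:split} applied to $\vvc$: actually I must be careful here, and instead apply Lemma~\ref{lem:split} to the \emph{whole} monomer collection of $\alpha$ after noting that a sub-split of $\rho$'s monomers extends to a split of $\alpha$. Concretely: the split $\vvc_1,\vvc_2$ of $\rho$'s monomers, together with leaving every other polymer of $\alpha$ intact, yields a saturated configuration of $\alpha$'s monomer collection with strictly more components than $\alpha$ has, contradicting stability. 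So case~\eqref{lem:farkas:balanced} cannot occur.

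In case~\eqref{lem:farkas:one-side}, there is $\vh \in \{0,\pm1,\ldots,\pm K\}^d$ with $\vh^\sfT \vc_j \geq 1$ for every large-count column. Write $\vvc = \vvc_{\mathrm{small}} + \vvc_{\mathrm{large}}$ splitting by small/large count; then $\vh^\sfT \vd = \vh^\sfT \vM_\calT \vvc_{\mathrm{small}} + \vh^\sfT \vM_\calT \vvc_{\mathrm{large}} \geq \vh^\sfT \vM_\calT \vvc_{\mathrm{large}} \geq \|\vvc_{\mathrm{large}}\|$, using that each large-count monomer contributes at least $1$. On the other hand $\vh^\sfT \vd = \vh^\sfT \vM_\calT \vvc_{\mathrm{small}}$ plus a nonpositive-or-controlled term... here the key estimate is that $\vh^\sfT \vd$ is also bounded: since $\vd \geq \vec 0$ and $\vh$ has entries of magnitude at most $K$, and $\vd(i) = (\vM_\calT \vvc)(i)$ is itself at most $a \cdot \|\vvc_{\mathrm{small}}\| + a\|\vvc_{\mathrm{large}}\|$ — so this needs rearranging. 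The clean way: from $\|\vvc_{\mathrm{large}}\| \leq \vh^\sfT \vd = \vh^\sfT \vM_\calT(\vvc_{\mathrm{small}}+\vvc_{\mathrm{large}})$ and moving the large part over, one gets a bound on $\|\vvc_{\mathrm{large}}\|$ purely in terms of $\|\vvc_{\mathrm{small}}\| \leq mK$, $a$, $d$, and $K$; carrying the constants through gives $\|\vvc\| = \|\vvc_{\mathrm{small}}\| + \|\vvc_{\mathrm{large}}\| \leq 2(m+d)(ad)^{2d+3}$. I expect the \textbf{main obstacle} to be exactly this last bookkeeping step in case~\eqref{lem:farkas:one-side}: isolating $\|\vvc_{\mathrm{large}}\|$ correctly when $\vh^\sfT \vM_\calT \vvc_{\mathrm{large}}$ appears on both sides, and tracking the polynomial-in-$(ad)$ factors so the final exponent is $2d+3$ and not something larger. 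The secondary subtlety is making the reduction from "split of a polymer" to "split of the full configuration, contradicting stability of $\alpha$" fully rigorous, which is really just the observation that stability is a global property but excising/re-bonding within one component leaves the others untouched.
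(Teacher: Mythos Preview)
Your overall strategy matches the paper's, and your handling of case~\eqref{lem:farkas:balanced} is fine (the ``secondary subtlety'' dissolves once you note that a polymer of a stable configuration is itself a stable configuration of its own monomer collection, so you may as well assume $\alpha=\rho$). The genuine gap is in case~\eqref{lem:farkas:one-side}, and it is not just bookkeeping.

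You apply Lemma~\ref{lem:farkas} to the large-count columns of $\vM_\calT$ and obtain $\vh$ with $\vh^\sfT \vc_j \geq 1$. This gives $\|\vvc_{\mathrm{large}}\| \leq \vh^\sfT \vM_\calT \vvc_{\mathrm{large}} = \vh^\sfT \vd - \vh^\sfT \vM_\calT \vvc_{\mathrm{small}}$. The second term is harmless, but $\vh^\sfT \vd$ is not: $\vd = \vM_\calT \vvc$ has entries that can be of order $a\|\vvc_{\mathrm{large}}\|$, so your upper bound on $\|\vvc_{\mathrm{large}}\|$ is of the form $\|\vvc_{\mathrm{large}}\| \leq daK\,\|\vvc_{\mathrm{large}}\| + (\text{small})$ with $daK \gg 1$, which is vacuous. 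There is no rearrangement that rescues this, because nothing you have assumed controls the sign or size of the individual entries $\vd(i)$ relative to $\vh(i)$.

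The paper's fix is to introduce slack variables \emph{before} invoking Farkas: set $\vM' = [\vM_\calT \mid -\vI_d]$ and $\vvc' = (\vvc,\vd) \in \N^{m+d}$, so that $\vM'\vvc' = \vec 0$ exactly. Now ``large-count'' is defined on $\vvc'$, so any coordinate $\vd(i) \geq K$ contributes its column $-\ve_i$ to the set handed to Lemma~\ref{lem:farkas}. In case~\eqref{lem:farkas:one-side} you then get $0 = \vh^\sfT \vM' \vvc' = \sum_{j\in J}(\vh^\sfT \vm'_j)\vvc'(j) + \sum_{j\in\overline J}(\vh^\sfT \vm'_j)\vvc'(j)$, and since every $\vvc'(j)$ for $j\in\overline J$ is below $K$, the second sum is bounded by $(m+d)\,dK^2 a$; this bounds $\sum_{j\in J}\vvc'(j)$ directly, with no circularity. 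In case~\eqref{lem:farkas:balanced} one checks that the first $m$ components of the resulting $\vvc'_1$ give a nonzero $\vvc_1$ with $\vM_\calT\vvc_1 \geq \vec 0$ (the slack columns only help here), so Lemma~\ref{lem:split} still applies. The extra $d$ in $m+d$ and the squared $K$ are exactly what produce the exponent $2d+3$.
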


{\small
\bibliographystyle{plain}
\bibliography{tam}
}


\newpage
\appendix

\section{Proof of Lemma~\ref{lem:translator}}
\label{app:translator}

\begin{proof}
  By structural induction on cascade depth $k$.
  Consider as a minimal element a saturated configuration $\gamma_1$ having output $1$, for a translator cascade of depth $k=1$, redundancy $n$, and without input.
  By assumption, the terminator monomer $\{x_2,...,x_2\}$ is free.
  To saturate the $n$ codomains of type $x_{2}^*$, the $\{x_2^*,...,x_2^*\}$ monomer and the $n$ $\{x_1^*,x_2\}$ monomers must be in the same polymer.
  To saturate the $n$ codomains of type $x_1^*$, the $n$ $\{x_1\}$ monomers must also be in the same polymer containing the $n$ $\{x_1^*,x_2\}$ monomers.
  There are therefore two polymers containing the following monomers:
  (1) the terminator monomer $\{x_2,...,x_2\}$, and (2) every other monomer.
  Thus, $S(\gamma_1)=2$.

  Assume that if $\gamma_k$ is a saturated configuration having output $1$ for a translator cascade of depth $k$, redundancy $n$, and without input, then $S(\gamma_k)=n(k-1)+2$.

  Consider a saturated configuration $\gamma_{k+1}$ having output $1$, for a translator cascade of depth $k+1$, redundancy $n$, and without input.
  Let $L_i=n \times \{x_i*,x_{i+1}\} \cup n\times \{x_{i}\}$ be the gate monomers of each layer $i$; $1 \leq i \leq k+1$.
  We first modify $\gamma_{k+1}$ into a saturated configuration $\gamma'_{k+1}$ with output $1$, such that there are no bonds between monomers in $L_1$ and $L_2$.
  The only possible bonds between monomers in $L_1$ and $L_2$ is between a monomer $\{x_1^*, x_2\} \in L_1$ and a monomer $\{x_2^*, x_3\} \in L_2$.
  Let $p$ be the number of bonds between these two types of monomers.
  If $p=0$, we are done.
  Otherwise, we note that there must be $p$ free $\{x_2\}$ monomers in $\gamma_{k+1}$.
  Let $\gamma'_{k+1}$ be the configuration where the $p$ bonds between $L_1$ and $L_2$ monomers in $\gamma_{k+1}$ are replaced by $p$ new bonds between $\{x_2\}$ and $\{x_2^*,x_3\}$ monomers, both from $L_2$.
  Thus $\gamma'_{k+1}$ remains saturated and can be partitioned into two saturated sub-configurations:
  $\gamma'_{L_1}$ containing the monomers from $L_1$ and $\gamma'_{R}$ containing the remainder.
  Since there are no bonds between the sub-configurations then $S(\gamma'_{k+1}) = S(\gamma'_{L_1}) + S(\gamma'_{R})$.

  First, note that the $2n$ monomers from $L_1$ can only form a single saturated configuration containing $n$ polymers, with each polymer being an $\{x_1\}$ monomer bound to a $\{x_1^*, x_2\}$ monomer.
  Thus, $S(\gamma'_{L_1})=n$.
  Second, note that $\gamma'_R$ is a saturated configuration having output $1$ for a translator cascade of depth $k$, redundancy $n$, and without input.
  Thus, $S(\gamma'_R)=n(k-1)+2$ by the inductive assumption.
  Therefore, $S(\gamma'_{k+1}) = nk + 2$ establishing the claim.
\end{proof}

\section{Proof of Lemma~\ref{lem:split}}
\label{app:split}

\begin{proof}
Let $\vd = \vM_\calT \vvc$, $\vd_1 = \vM_\calT \vvc_1$, and $\vd_2 = \vM_\calT \vvc_2$.
For $i \in \ints{d}$, $\vd(i) \geq 0$ is the number of unbound domains of type $d_i \in \calD$ in every saturated configuration $\alpha \in [\vvc]$ (and no $d_i^*$ domains are unbound in $\alpha$).
Let $\alpha_1\in[\vvc_1]$ and $\alpha_2\in[\vvc_2]$ be saturated configurations.
Define the configuration $\alpha\in[\vvc]$ by $\alpha = \alpha_1 \cup \alpha_2$;
note that $S(\alpha) > 1$ since there are no bonds between $\alpha_1$ and $\alpha_2$.
Excess domains of each type in $\alpha$ are given by the vector
$\vd_1 + \vd_2 = \vM_\calT \vvc_1 + \vM_\calT \vvc_2 = \vM_\calT (\vvc_1 + \vvc_2) = \vM_\calT \vvc = \vd$.
Thus $\alpha$ is saturated by Observation~\ref{obs:excess-domains-same-across-saturated-configs}.
Since $S(\alpha) > 1$, every stable configuration of $[\vvc]$ has at least two polymers as well.
\end{proof}

\section{Upper bound on the size of a stable polymer if it is a tree}
\label{app:acyclic-self-assembly}

\begin{figure}[t]
\centering
\includegraphics[width=\textwidth]{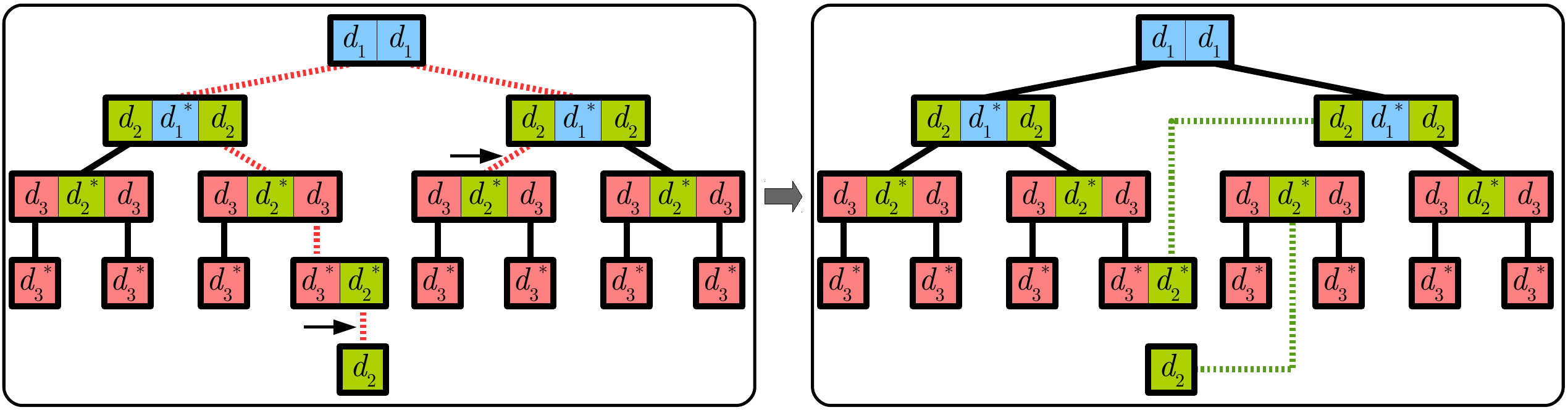}
\caption{
{\bf left:}
Example showing why a path (dashed red edges) in an acyclic monomer binding graph that repeats the same primary domain/codomain pair \emph{in the same order} implies the configuration is not stable.
This configuration is similar to Fig.~\ref{fig:tree-polymer}, but by extending the depth of the tree without creating a new domain type, we are forced to reuse the same domain type three times along a path, so two occurrences as one moves along the path (marked with black arrows) have the same primary/codomain ordering.
{\bf right:}
Exchanging the bonds between these occurrences (dashed green edges) disconnects the graph---increasing entropy---while keeping the configuration saturated.
}
\label{fig:tree-upper-bound}
\end{figure}

\begin{proposition}
  Let $\calT=(\calD,\calM)$ be a TBN,
  $d=|\calD|$,
  and $l = \max\limits_{\vm \in \calM} \|\vm\|_1$ be the maximum number of domains in any monomer type.
  Then any stable polymer $\rho$ of $\calT$ such that $G_\rho$ is acyclic has at most $1 + l \cdot \frac{(l-1)^{d}-l+1}{l^2-3l+2}$ monomers.
\end{proposition}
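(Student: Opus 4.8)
The plan is to exploit the tree structure of $G_\rho$ together with the bond-swapping idea behind Theorem~\ref{thm:polymer-upper-bound}: any local rebonding that keeps a configuration saturated while creating a new connected component contradicts stability. First note that $G_\rho$ is a tree --- connected because a polymer is a single connected component, and acyclic by hypothesis --- and that its maximum degree is at most $l$, since a monomer has at most $l$ domains, each lying in at most one binding edge, hence at most $l$ neighbours in $G_\rho$. So it suffices to show $G_\rho$ has radius at most $d-1$: rooting at a central monomer, the number of monomers is then at most $1+l+l(l-1)+\cdots+l(l-1)^{d-2}=1+l\cdot\frac{(l-1)^{d-1}-1}{l-2}=1+l\cdot\frac{(l-1)^{d}-l+1}{l^{2}-3l+2}$, which is the claimed bound (for $l\le 2$, where $G_\rho$ is a path, the statement should be read as its limiting value $2d-1$). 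Since $G_\rho$ is a tree, ``radius at most $d-1$'' is equivalent to ``every path in $G_\rho$ has at most $2(d-1)$ edges''.

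The core step is therefore to bound the length of a longest path $M_0,M_1,\ldots,M_p$ in $G_\rho$. For each path edge $\{M_{i-1},M_i\}$ fix one binding edge realising it; it pairs a copy of some primary domain $a_i\in\calD$ with a copy of its codomain $a_i^*$, and I also record on which of the two monomers the primary sits --- its orientation relative to the path direction. The key claim is that no (type, orientation) pair can occur twice. Suppose edges $i<j$ both have type $a$ with the primary on the lower-indexed endpoint. Then I rebond the copy of $a$ in $M_{i-1}$ (currently joined to the copy of $a^*$ in $M_i$) to the copy of $a^*$ in $M_j$, and the copy of $a$ in $M_{j-1}$ to the copy of $a^*$ in $M_i$. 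Every domain stays bound, so the configuration is still saturated; but in $G_\rho$ this deletes the tree edges at positions $i$ and $j$ and adds only $\{M_{i-1},M_j\}$ (joining the two pieces lying outside the segment $M_i,\ldots,M_{j-1}$) and $\{M_{j-1},M_i\}$ (an edge inside that segment), so the segment splits off as its own polymer and $S$ strictly increases --- contradicting that $\rho$ is a single stable polymer. In particular no domain type can label three edges of the path (three occurrences would contain two with the same orientation), which is exactly the obstruction drawn in Fig.~\ref{fig:tree-upper-bound}; this already caps $p$ at $2d$, and to bring it down to $2(d-1)$ I would additionally use that the endpoints $M_0,M_p$ of a longest path are leaves of $G_\rho$, so the binding edges at positions $1$ and $p$ cannot both participate in the worst-case ``each type used in both orientations'' pattern.

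I expect this constant-shaving to be the main obstacle, together with two technical points the swap argument must handle: (i) if two path-adjacent monomers already share two binding edges of the same type and opposite orientation, one should first internalise those two bonds (binding primary to codomain within each monomer), which by itself severs that tree edge and finishes; and (ii) ``deleting the tree edge at position $i$'' must genuinely remove it from $G_\rho$, i.e.\ that edge should carry a single binding edge, else one picks a different repeated pair or combines with point (i). A cleaner alternative that sidesteps the leaf bookkeeping: after relabelling via Observation~\ref{obs:Apos} so that no codomain of $\rho$ is ever unbound, a sufficiently deep subtree of $G_\rho$ must repeat a binding type along its spine; the swap above then lets one arrange the subtree and its complement so that each has a non-negative excess of every primary domain type, whereupon Lemma~\ref{lem:split} forces the stable configuration to have entropy greater than $1$ --- impossible for a single polymer.
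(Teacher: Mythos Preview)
Your approach---the bond-swap on a path edge pair with repeated (domain type, orientation), followed by the node count for a bounded-radius, bounded-degree tree---is exactly the paper's. The paper likewise argues that a simple path of length $2d+1$ forces two edges with the same type and orientation, swaps those two bonds to obtain a saturated configuration with an extra component, and concludes that the diameter of $G_\rho$ is at most $2d$; it then writes the node count $1 + l\sum_{j=1}^{d-1}(l-1)^{j-1}$ for the extremal tree.

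You are right that this closed form corresponds to radius $d-1$ (equivalently diameter $2d-2$), not to diameter $2d$: the paper's own proof contains precisely this off-by-one and does not supply the extra shaving you identify as the ``main obstacle''. Your leaf heuristic is not obviously sufficient to recover the missing step, but you have not lost anything relative to the paper. Your side concerns (i) and (ii) are also not addressed in the paper's argument; they disappear if ``$G_\rho$ acyclic'' is read as acyclicity of the multigraph obtained by contracting monomer edges (so that each tree edge of $G_\rho$ is realised by a single binding edge), which is the natural reading of the definition and is what makes the swap cleanly delete both path edges.
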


\begin{proof}
Fig.~\ref{fig:tree-upper-bound} shows an example.
A path in $G_\rho$ traverses an edge by either moving from a monomer with a codomain $d_i^*$ to a monomer with a primary domain $d_i$, or vice versa.
Suppose there is a simple path in $G_\rho$ of length $2d+1$.
Then by the pigeonhole principle, some domain $d_i$ is traversed twice with the same primary/codomain ordering.
Then by binding the first instance of $d_i$ to the second instance of $d_i^*$ and the second instance of $d_i$ to the first instance of $d_i^*$, we introduce a cycle.
However, since the number of edges has not changed, the new graph must be disconnected,
i.e., split into at least two polymers.
But since this new configuration is saturated, the original could not have been stable.
Therefore the diameter of any acyclic stable monomer binding graph is at most $2d$.

Note that $l$ upper bounds the degree of $G_\rho$.
Any degree-$l$ acyclic graph of diameter $\leq 2{d}$ has at most $1 + l \cdot \sum_{j=1}^{d-1} (l-1)^{j-1} = 1 + l \cdot \frac{(l-1)^{d}-l+1}{l^2-3l+2}$ nodes:
the worst case is a rooted complete $(l-1)$-ary tree of depth $d$ (so diameter $2d$), whose root is the only node that can have up to $l$ children since it has no parent. 
\end{proof}

\section{Proof of Theorem~\ref{thm:polymer-upper-bound}}
\label{app:proof-polymer-upper-bound}

\begin{proof}
It suffices to prove the theorem in the special case where $\alpha$ is itself a single stable polymer.
Let $\vvc$ be the monomer collection of $\alpha$.
By Observation~\ref{obs:Apos} we can assume without loss of generality that $\vM_\calT \vvc \geq \vec{0}$. 

To turn this inequality into an equality, we introduce slack variables.
Let $\vI_d$ be the $d \times d$ identity matrix.
Let $\vM' = [\vM_\calT | -\vI_d]$ ($\vM_\calT$ concatenated horizontally with $-\vI_d$); note $\vM'$ has dimension $d \times (m+d)$.
Let $\vd = \vM_\calT \vvc \in \N^d$; by Observation~\ref{obs:Apos-and-excess-saturated} $\vd(i)$ is the number of unbound primary domains $d_i \in \calD$ in every saturated configuration in $[\vvc]$ (and no such configuration has any unbound codomains).
Concatenating $\vvc$ with $\vd$ to obtain $\vvc' \in \N^{m+d}$, we have $\vM' \vvc' = \vec{0}$.
Let $\vm'_1,\ldots,\vm'_{m+d} \in \N^d$ denote the columns of $\vM'$.

Intuitively, $\vM'$ can be thought of as the monomer matrix for a modified TBN that has all the monomer types of $\calM$,
and in addition, for each $i \in \ints{d}$, also has monomer type $\vm'_{m+i} = \{d_i^*\}$.
For each $i \in \ints{d}$, $\vvc'(m+i)$ represents how many monomers of type $\vm'_{m+i}$ should be added to a saturated configuration of $[\vvc]$ to bind to every unbound $d_i$ primary domain, so that the resulting configuration has no unbound domains (i.e., $\vM' \vvc' = \vec{0}$).

Let $K=(a d)^{d+1}$.
If $\max(\vvc) < K$ (i.e., all monomers have count at most $K$),
then $\|\vvc\| < m K = m (a d)^{d+1} < 2 (m+d) (ad)^{2d+3}$, and we are done.
Otherwise, let $J = \{j_1, j_2,\ldots,j_l\} = \{ j \in\{1,\ldots,m+d\} \mid \vvc'(j) \geq K \}$ denote the set of indices of components of $\vvc'$ that are at least $K$ (we call these \emph{large-count monomer types} below),
and let $\overline{J} = \{1,\ldots,m+d\} \setminus J$ be all the other indices.
Let $\vv_1,\vv_2,\ldots,\vv_l$ denote the corresponding columns of $\vM'$, i.e., $\vv_i = \vm'_{j_i}$.

We now consider two cases.

\begin{enumerate}
\item 
    There exist integers $n_1, \ldots, n_{l}$ between $0$ and $K$, not all $0$, such that
    $\sum_{j = 1}^{l} n_j \vv_j= \vec{0}$.
    Define $\vvc'_1 \in \mathbb{N}^{m+d}$ by $\vvc'_1(j_i) = n_i$ for $i \in \{1,\ldots,l\}$ and $\vvc'_1(j)=0$ for all $j \in \overline{J}$.
    In other words, $\vvc'_1$ has $n_i$ copies of the $i$'th large-count monomer type and none of the others.
    Define $\vvc'_2 = \vvc' - \vvc'_1$.  Note that by the way $\vvc'_1$ was constructed $\vvc'_2 \geq \vec{0}$.
    We have $\vM' \vvc'_1 = \sum_{j = 1}^l n_j \vv_j = \vec{0}$ by definition.
    From this it follows that
    $\vM' \vvc'_2 = \vM' (\vvc'-\vvc'_1) = \vM' \vvc' - \vM' \vvc'_1 = \vec{0}$.

    Define $\vvc_1=(\vvc'_1(1), \vvc'_1(2), \ldots, \vvc'_1(m))$ and $\vvc_2=(\vvc'_2(1), \vvc'_2(2), \ldots, \vvc'_2(m))$.
    Since $\vvc'_1+\vvc'_2=\vvc'$, we have $\vvc_1+\vvc_2=\vvc$.
    The last $d$ columns of $\vM'$, i.e., $\vm'_{m+1}, \ldots, \vm'_{m+d}$, are all negative unit vectors,
    and $\vvc'(m+1), \ldots, \vvc'(m+d) \geq 0$.
    Thus, since $\vM' \vvc_1' = \vec{0}$,
    we have that $\vM_\calT \vvc_1 \geq \vec{0}$.
    A similar argument establishes that $\vM_\calT \vvc_2 \geq \vec{0}$.
    Since $\vvc_1, \vvc_2 \in \mathbb{N}^{\calM}$ and $\vvc_1 + \vvc_2 = \vvc$, by Lemma~\ref{lem:split}, $S(\alpha)>1$, a contradiction since we assumed $\alpha$ is a single polymer.

\item 
    No such integers exist.
    We show that this implies that the large-count monomer types can't be \emph{too} large.
    By Lemma~\ref{lem:farkas} there is a vector $\vh \in \{0, \pm 1, \ldots, \pm K\}^d$ such that
    $\vh^\sfT \cdot \vv_j \geq 1$ for all $j \in \{1, \ldots, l\}$;
    put another way, $\vh^\sfT \cdot \vm'_j \geq 1$ for all $j \in J$.
    Taking the dot product of $\vh^\sfT$ with both sides of the equation $\vM' \vvc' = \vec{0}$ gives
    $\vh^\sfT \cdot \vM' \vvc' = \vh^\sfT \cdot \vec{0} = 0$, i.e.,
    $\sum_{j=1}^{m+d} (\vh^\sfT \cdot \vm'_j) \vvc'(j) = 0$.
    Moving the sum elements corresponding to $\overline{J}$ to the other side of this equation,
    $$
      \sum_{j\in J} (\vh^\sfT \cdot \vm'_j) \vvc'(j)
      =
      - \sum_{j \in \overline{J}} (\vh^\sfT \cdot \vm'_j) \vvc'(j)
    $$
    Since $1 \leq \vh^\sfT \cdot \vm'_j$ for all $j \in J$,
    \begin{align*}
      \sum_{j \in J} \vvc'(j)
      \leq 
      \sum_{j \in J} (\vh^\sfT \cdot \vm'_j) \vvc'(j)
      = 
      -\sum_{j \in \overline{J}} (\vh^\sfT \cdot \vm'_j) \vvc'(j)
      < 
      (m+d) d K^2 a,
    \end{align*}
    where the last inequality follows from the fact that
    there are $|\overline{J}| < m+d$ entries in the sum,
    $d$ terms in each dot product $\vh^\sfT \cdot \vm'_j$,
    $0 \leq \vvc'(j) < K$ for all $j \in \overline{J}$,
    $\amax(\vh) \leq K$,
    and
    $\amax(\vM') \leq a$.
    Also note
    $\sum_{j\in\overline{J}} \vvc'(j) < (m+d) K.$
    To complete the proof, observe that since $\vvc'$ is $\vvc$ with extra concatenated nonnegative components,
    \begin{align*}
      \|\vvc\| \leq \|\vvc'\|
      & =
      \sum_{j \in J} \vvc'(j) + \sum_{j \in \overline{J}} \vvc'(j)
      < 
      (m+d) d K^2 a + (m+d) K
      \\ & <
      2 (m+d) d K^2 a
      = 
      2 (m+d) (ad)^{2d+3}. \qedhere
    \end{align*}
\end{enumerate}

\end{proof}

\section{Self-assembly: kinetics versus thermodynamics}
\label{sec:kineticexamples}

In this section we compare our thermodynamic model (TBNs) with a well-studied kinetic model of self-assembly called the abstract Tile Assembly Model (aTAM).  We first show that it is not necessarily the case that the output (terminal) assembly of an aTAM system is thermodynamically stable when interpreted as a polymer in the TBN model. Specifically we show this for an aTAM binary counter system. 
Then, we propose a new aTAM system which assembles a binary counter that is also  stable polymer in the TBN model. This shows that it is possible, at least for a binary counter, to have systems that are tile/monomer efficient and  that self-assemble to an intended target structure in both a kinetic and thermodynamic sense.

\subsection{Interpreting aTAM assemblies as TBN polymers}\label{sec:aTAMtoTBN}
The  fundamental components in the aTAM are non-rotatable unit-sized 2D squares with a single ``glue'' on each of their four sides.  A glue consists of a label (which is usually a string or color) and a non-negative integer value called the strength.  The glues of two tiles can bind provided that their labels and strengths match.  Growth in the aTAM occurs on the integer square lattice of points ($\mathbb{Z}^2$) and begins with a connected initial configuration of tiles, called a seed, and then proceeds in an asynchronous manner with one tile attaching at a time.  A tile can attach to the existing assembly provided that the combined strength of all the glues with which it binds is greater than or equal to a parameter of the system which we call the temperature. An assembly has an associated binding graph: an undirected graph on $\mathbb{Z}^2$ with nodes corresponding to tiles in the assembly and edges corresponding to bound glues. An assembly to which no more tiles can attach is said to be terminal, Fig.~\ref{fig:aTAM-counter}(a) shows an example. 
Formally, an aTAM system is defined to be a triple $(T,\sigma,\tau)$ where $T$ is a finite set of tiles, $\sigma\in T$ is the seed tile and $\tau\in\mathbb{Z}^+$ is the temperature. 
 Comprehensive informal and formal definitions of the aTAM can be found in~\cite{PatitzSurveyJournal}.

We can interpret any aTAM assembly as a TBN polymer in the following way. Each aTAM tile is mapped to a TBN monomer and each aTAM glue on a tile side is mapped to a TBN domain in a monomer. The binding graph of an aTAM assembly is mapped to a corresponding TBN configuration (which, due  geometry, is a graph and not a multigraph).  
 Notion in the aTAM tiles having a square shape, there being a seed tile, temperature, and binding occurring on the integer lattice have no corresponding notions in the TBN.

\subsection{A tile-based binary counter that is not a thermodynamically stable polymer}
In this subsection we show that it is not necessarily the case that the terminal assembly of an aTAM system is thermodynamically stable when interpreted as a polymer in the TBN model. 
To get an immediate intuitive sense of a difference in the two models simply observe that a monomer in the TBN model, can ``self-saturate'' meaning that if it contains two domains which are complementary, then those two domains can bind to each other as opposed to binding to domains of other monomers.\footnote{Experimentally, for example, such a situation can occur when using the DNA single stranded tile motif~\cite{reif08yin}.} Even worse, in a TBN system there is no geometric requirement on the arrangement of bonds, so polymers can form into arbitrary networks. 
However, in the aTAM square tiles can never self-bind, nor can they bind to tiles that are not adjacent on the 2D square lattice.

Part (a) of Fig.~\ref{fig:aTAM-counter} shows the unique terminal assembly of an aTAM system which assembles a binary counter that counts to $16$.  We briefly describe the system here.  The system has a tile set $T$ which is the union of all tiles shown in Fig.~\ref{fig:aTAM-counter}(a), a seed tile $\sigma$ which is the bottom leftmost tile, and works at temperature 2.  
Strength 1 and 2 glues are respectively denoted as thin and thick coloured lines between adjacent tiles. 
The tile set $T$ contains four ``hard-coded'' tiles with strength 2 glues so that the leftmost (or ``first'')  column of the counter can grow directly from the seed at temperature~2.  Notice in the assembly that the strength~2 glues which initiate the growth of new columns alternate between occurring on the east of the bottom tile of the column and occurring on the east of some other tile in the column. 
When they occur on the bottom, growth continues upwards with a carry $c$ being propagated, causing bits to be flipped from west to east, until a $0$ is met. 
If the strength 2 glues initiates a column from elsewhere, there are tile types in $T$ so that the values of glues are simply copied to the east except for the least significant (i.e.\ bottom) bit which is incremented from $0$ to $1$.  
 Growth of the counter is halted when the red strength 2 glue is exposed which causes a column of tiles to bind which do not have any output glues.


We can interpret the assembly shown in Fig.~\ref{fig:aTAM-counter}(a) as a TBN polymer $\alpha$ by viewing the square tiles as monomers and their glues as domains, as described  in Section~\ref{sec:aTAMtoTBN}. Observe that the TBN configuration  $\beta$ in Fig.~\ref{fig:aTAM-counter}(b) has the same multiset of monomers as~$\alpha$.
Furthermore, one of the tile-monomers of $\beta$ is self-saturated (each of its domains is bound to another domain on the same monomer). Finally, observe that $H(\alpha) = H(\beta)$ (both have the same number of bound domains) but $S(\alpha) < S(\beta)$ ($\beta$ has one more polymer than $\alpha$). Hence  $\alpha$ is not a stable configuration. 
This yields the straightforward conclusion that a terminal assembly producible in the aTAM is not necessarily a thermodynamically stable polymer when interpreted in the TBN model.


\begin{figure}[t]
\centering
\includegraphics[width=\textwidth]{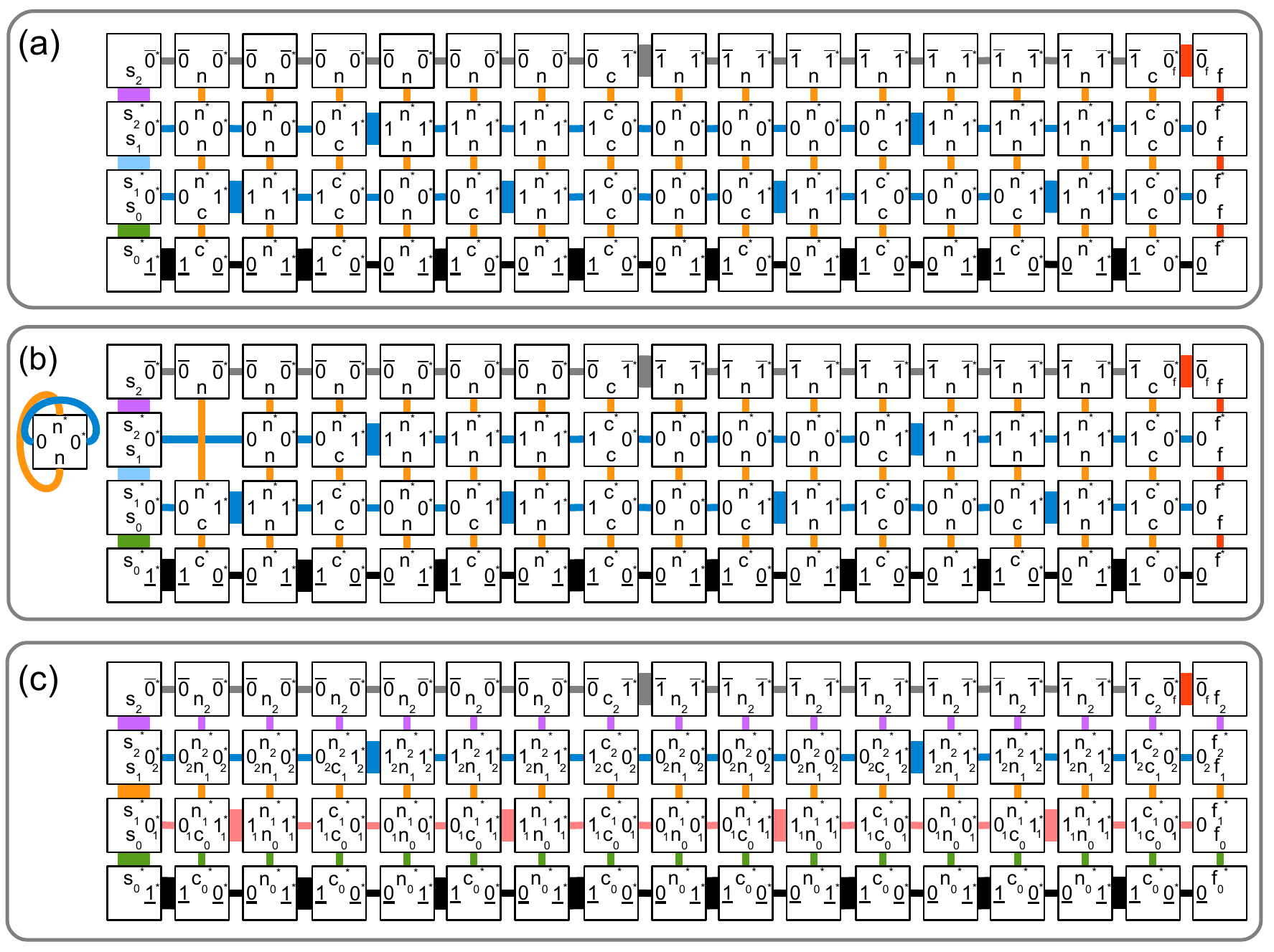}
\caption{\textbf{(a)} ~A binary counter in the abstract Tile Assembly Model (aTAM) that self-assembles a $k \times (2^k+1)$ rectangle from $k + O(1)$ square tiles, where $k=4$. Thick lines between tiles are ``strength 2'' domains and thin lines are ``strength 1'' domains.  Colors of glues between tiles help to indicate ``positional specificity''.
(For aTAM glues we ignore all ``$\ast$'' annotations in the figure.) 
This assembly can be interpreted as a TBN polymer (where tiles become monomers, and glues become domains). However, the resulting TBN polymer is not stable, since \textbf{(b)} shows another stable TBN configuration using the same set of monomers, but with greater entropy since it is composed of two polymers (i.e.\  monomers can ``self-saturate'').
\textbf{(c)}~An aTAM counter using $O(k)$ tiles, that is also a stable polymer when interpreted in the TBN model. 
Compared to part (a) and (b), in (c)  each monomer encode its height $i$ (i.e., the significance of the relevant bit), by having domains and codomains subscripted with $i$ or $i-1$. In the proof of its stability, this polymer is denoted $\gamma$.}
\label{fig:aTAM-counter}
\end{figure}

\subsection{A thermodynamically stable binary counter}
Part (c) of Fig.~\ref{fig:aTAM-counter} shows a binary counter that is a terminal assembly producible by an aTAM system. 
Let $\gamma$ be the interpretation of that assembly as a TBN polymer.  
This section gives a proof of Lemma~\ref{lem:gammaIsStable}, but first we give a little intuition. 

 To create the tile assembly system that builds the terminal assembly in Fig.~\ref{fig:aTAM-counter}(c), the tile assembly system in Fig.~\ref{fig:aTAM-counter}(a) is modified so that the labels of the glues encode the natural number ``height'' at which they appear. 
Intuitively we can already see an obvious difference between $\beta$ and $\gamma$ since in  $\gamma$  the ``top'' and ``bottom'' domain of each monomer is distinct, hence it is not possible for any monomer in $\gamma$ to bind only to itself; it must bind to other  monomers in a stable configuration. However, this observation alone is not sufficient to show that the single-polymer configuration $\gamma$ in Fig.~\ref{fig:aTAM-counter}(c) is stable. In the remainder of this section we argue that that is indeed the case. 

\begin{lemma}\label{lem:gammaIsStable}
$\gamma$ is a stable polymer in the TBN model. 
\end{lemma}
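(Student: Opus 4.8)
The plan is to apply the linear-algebra framework from Section~\ref{sec:polymer-bounds}, in particular Lemma~\ref{lem:split}, to show that $\gamma$ cannot be split into two saturated subpolymers, which is exactly what stability requires. First I would set up the monomer matrix $\vM_\calT$ of the TBN $\calT$ underlying $\gamma$, where monomer types are the tile types in Fig.~\ref{fig:aTAM-counter}(c) and domain types are the subscripted glue labels. The key structural feature, already noted in the text, is that the glue labels encode the row index $i$ (the bit significance), so each monomer of ``height'' $i$ carries exactly one ``down'' codomain subscripted $i-1$ and one or more ``up'' domains subscripted $i$; the two never coincide. I would record, for $\gamma$ being saturated, that every domain is bound (there are no excess domains), i.e.\ $\vM_\calT \vvc_\gamma = \vec 0$ where $\vvc_\gamma$ is the monomer collection of $\gamma$; one should double-check using the explicit tile set that the counts of each primary domain and its codomain are equal in $\gamma$.

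Next, suppose for contradiction that $\gamma$ is not stable; then there is a saturated configuration with $S > 1$, and by Lemma~\ref{lem:split} (or directly) this forces a nontrivial partition $\vvc_\gamma = \vvc_1 + \vvc_2$ with $\vM_\calT \vvc_1 \ge \vec 0$ and $\vM_\calT \vvc_2 \ge \vec 0$; since $\vM_\calT \vvc_\gamma = \vec 0$, summing the two nonnegative vectors to zero forces $\vM_\calT \vvc_1 = \vM_\calT \vvc_2 = \vec 0$. So the whole argument reduces to: the only way to write $\vvc_\gamma$ as a sum of two nonnegative integer vectors each annihilated by $\vM_\calT$ is the trivial way. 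I would argue this by a ``layer peeling'' induction on the row index $i$, from the top of the counter downwards (or bottom up). The topmost row contains a unique capping monomer type (the one with no output glues, corresponding to the halting column), or more generally the monomers of maximal height; their ``up'' domains at level $i_{\max}$ have no complementary codomains anywhere else, so balancing at level $i_{\max}$ already pins down how those monomers are distributed between $\vvc_1$ and $\vvc_2$. Inductively, once the distribution of all monomers of height $> i$ is fixed, the codomains subscripted $i$ carried by height-$(i+1)$ monomers are fixed in each part, which in turn pins down (via the level-$i$ balance equation, using that height-$i$ monomers are the only carriers of primary domains subscripted $i$) how the height-$i$ monomers split. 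Carried all the way down, every monomer of $\gamma$ must end up entirely in one of $\vvc_1,\vvc_2$, contradicting that both are nonempty and that the binding-graph connectivity of $\gamma$ (a single polymer) is respected — or more simply, contradicting that $\gamma$, being a connected polymer, cannot be the disjoint union of two nonempty saturated subconfigurations with no bonds between them. Hence $S(\alpha) = 1$ for every stable $\alpha \in [\vvc_\gamma]$, i.e.\ $\gamma$ is stable.

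To make the peeling rigorous I would exploit the counter's combinatorial structure: at each height $i$ the ``carry'' and ``copy'' tile types are distinguished by distinct glue labels, and the positional-specificity colors guarantee that, within a fixed height, the horizontal (west--east) and vertical (south--north) glue types form a rigid chain, so the balance equations at level $i$ are not merely ``same total count'' but actually ``same count of each labeled type.'' This rigidity is what upgrades ``the multiset of monomers balances'' into ``each monomer is forced into a unique part.'' I would also want to confirm that $\gamma$ is in fact saturated (every glue/domain bound) — this is immediate from the aTAM interpretation since in a terminal assembly of this counter every exposed glue on the interior is matched, and the boundary glues were designed (the $\ast$ annotations) so that every domain has its codomain present and bound in $\gamma$.

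The main obstacle will be the bookkeeping in the layer-peeling induction: one must carefully verify that the set of balance equations at level $i$, together with the already-determined split of higher layers, has a \emph{unique} nonnegative solution for the height-$i$ monomer counts in each part. This hinges on checking that no two distinct height-$i$ tile types contribute the same vector of level-$i$ domains/codomains (so the relevant submatrix has the injectivity needed), which requires reading off the concrete glue labels from the construction in Fig.~\ref{fig:aTAM-counter}(c) and the accompanying text; the danger is a tile type whose level-$i$ domain profile accidentally coincides with a nonnegative combination of others, which would open up a nontrivial split. I expect that the explicit encoding of height into every glue, plus the carry/copy label distinction, rules this out, but that verification is the crux of the proof.
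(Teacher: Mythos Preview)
Your reduction is sound and is, at heart, the same framework the paper uses: since $\gamma$ has every domain bound ($\vM_\calT\vvc_\gamma=\vec 0$), any saturated configuration with two or more polymers yields a nontrivial decomposition $\vvc_\gamma=\vvc_1+\vvc_2$ with $\vM_\calT\vvc_1=\vM_\calT\vvc_2=\vec 0$, and the task is to rule this out. One quibble: this is the \emph{converse} of Lemma~\ref{lem:split}; the direction you actually need follows directly (a saturated configuration of a fully balanced collection has every domain bound, hence so does each of its polymers), not from that lemma. The paper phrases the same reduction as: the polymer $\delta$ of any stable configuration that contains the seed monomer must contain every monomer.

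Where your plan has a genuine gap is the execution of the peeling. You propose to anchor at the top via ``up domains at level $i_{\max}$'' and ``the one with no output glues, corresponding to the halting column''. But the halting column is the rightmost \emph{column}, not the top row, and top-row monomers carry no north domain at all (else $\gamma$ would have an unbound domain), so the top row gives no anchor: nothing at level $k$ fixes how many row-$k$ monomers land in $\vvc_1$. Moreover, your hope that ``within a fixed height the horizontal glue types form a rigid chain'' fails, because tile types recur within a row---that is the whole point of a counter---so the within-row balance equations admit many nontrivial splits. The paper resolves this by anchoring on the \emph{seed monomer}: whichever part contains it must (by uniqueness of the seed-column domains) contain the entire seed column, and then a horizontal domain count in the bottom row forces the end column in as well. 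Only with both boundary columns in hand does a top-down count become well-posed, and it runs over the \emph{carry} monomers specifically: both $c^4$-monomers are forced in, then the row-internal $0/0^*$ balance gives $|c^i_{\mathrm n}|=|c^i_{\mathrm s}|$, whence $|c^i|=\tfrac12|c^{i-1}|$ for $i>2$ and $|c^2|=|c^1|$, cascading to $|c^1|=8$; a final horizontal count then yields $2|c^1|+1=17$ columns, i.e.\ all of $\gamma$. Your ``height encoding plus carry/copy distinction'' names the right ingredients, but without the seed/end-column anchor your induction has no base case, and the inductive step as you state it does not hold.
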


We define the \emph{seed column} to be the set of monomers in the leftmost column of $\gamma$ and 
and the \emph{end column} to be the set of monomers in the rightmost column.  
We also enumerate rows of $\gamma$ from bottom to top starting starting with $1$ for the bottom row.  
We refer to any monomer that contains a subscripted $c$ or $c^*$ glue in row $i \in \{1,2,3,4\}$  of $\gamma$  (counting from the bottom) as a $c^i$-monomer.  In addition, if a $c^i$-monomer contains a $c_{i-1}^*$ domain, we refer to it as a $c^i_\mathrm{n}$-monomer, otherwise we refer to it as a $c^i_\mathrm{s}$-monomer.  Furthermore, we refer to monomers in row $i$ which contain subscripted $n$ domains as $n^i$-monomers.

\newcommand{\containsSeed}{\delta}
To see that the configuration containing exactly $\gamma$ is thermodynamically stable, we will show that all stable configurations consist of one polymer with all domains bound.  

Let $C$ be {\em any} stable configuration consisting of the monomers shown  Fig.~\ref{fig:aTAM-counter}(c). Let $\containsSeed$ be the unique polymer in $C$ which contains the  {\em seed monomer} (i.e. the bottom-left tile of Fig.~\ref{fig:aTAM-counter}(c) interpreted as a monomer).  We will show that $\containsSeed$ must in fact contain all monomers in the polymer shown in Fig.~\ref{fig:aTAM-counter}(c), which in turn implies Lemma~\ref{lem:gammaIsStable}.

Let row $i$ of $\containsSeed$ be the set of all monomers in $\containsSeed$ that appear on row  $i$ of $\gamma$. Likewise left column $i$ of $\containsSeed$ be the set of all monomers in~$\containsSeed$ that appear on column~$i$ of~$\gamma$.

The following observation states that $\containsSeed$ cannot contain any partial columns. 
\begin{claim} \label{obs:no-partials}
The rows of $\containsSeed$ all contain the same number of monomers.
\end{claim}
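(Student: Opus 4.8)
The plan is to argue by a "breaking" argument: if $\containsSeed$ contained a partial column (some column of $\gamma$ represented by strictly fewer monomers in $\containsSeed$ than the full column has), then I would find a way to repartition the monomers of the stable configuration $C$ into strictly more polymers while keeping every domain bound, contradicting the maximality of entropy among saturated configurations. The key structural fact to exploit is the one emphasized just before the lemma: every monomer in $\gamma$ has its ``top'' domain and ``bottom'' domain subscripted with distinct heights ($i$ versus $i-1$), so bonds can only occur between monomers whose height subscripts are adjacent, and in particular a domain subscripted $i$ can only be satisfied by a codomain subscripted $i$ on a monomer of a neighboring row. This means that the monomer collection naturally stratifies by row, and a saturated configuration must bind up all the (sub- and super-)scripted domains within each row-pair consistently.

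First I would set up notation: fix the stable configuration $C$, the polymer $\containsSeed$ containing the seed monomer, and suppose for contradiction that the rows of $\containsSeed$ do \emph{not} all contain the same number of monomers. Since the seed column (leftmost column) monomers are linked to the seed monomer by chains of bonds that cannot be broken without leaving domains unsaturated (each seed-column monomer shares a height-$i$/height-$(i-1)$ interface with its vertical neighbor, and those are the only complements available), $\containsSeed$ must contain the entire seed column; hence $\containsSeed$ has at least one monomer in every row. Then ``not all rows equal'' means some row $i$ of $\containsSeed$ has strictly more monomers than some other row, so in particular there is a column $j$ of $\gamma$ that is only \emph{partially} present in $\containsSeed$: some monomer of column $j$ is in $\containsSeed$ and some monomer of column $j$ is in a \emph{different} polymer $\containsSeed'$ of $C$.

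Next I would derive the contradiction by a count-preserving exchange, exactly in the style of Lemma~\ref{lem:split} and the ``long polymer breaking'' picture (Fig.~\ref{fig:long-polymer-breaking}): because the domains are height-subscripted, the multiset of primary domains and the multiset of codomains of each given (type, height) must match up within the whole collection, and the argument will show that the horizontal ``column'' structure forces any saturated partition to keep each full column inside a single polymer or else strictly increase the number of components. Concretely, I expect to show that if column $j$ is split between $\containsSeed$ and $\containsSeed'$, one can rebind the relevant $c_i/c_i^*$, $n_i/n_i^*$ pairs so that column $j$ is entirely inside one polymer and the leftover monomers form at least one additional polymer, all while remaining saturated; this contradicts $C\in[\vvc]_\Box$. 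The main obstacle will be handling the ``carry'' tiles carefully: the $c^i_{\mathrm n}$- versus $c^i_{\mathrm s}$-monomer distinction and the alternating placement of strength-2 glues mean the local binding pattern in a column is not uniform from row to row, so I will need a short case analysis (or an invariant tracked from the bottom row upward) to confirm that the only saturated way to satisfy all height-$i$ subscripted domains in a row is the intended column-by-column pairing, which is what actually rules out partial columns. Once that local rigidity is established, the global conclusion that all rows of $\containsSeed$ have equal size, and indeed $\containsSeed = \gamma$, follows.
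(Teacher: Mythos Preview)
You have missed the observation that makes this a two-line proof in the paper, and the route you propose instead has a real gap.

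The paper does not invoke entropy maximality here at all; it argues purely from saturation. Since $\gamma$ has \emph{no} unbound domains, the primary/codomain counts balance exactly for every domain type, so (this is essentially Observation~\ref{obs:excess-domains-same-across-saturated-configs} with $\vd=\vec{0}$) every saturated configuration of this monomer collection---in particular $C$, and hence every polymer of $C$---has no unbound domains. That is the entire content of the paper's first sentence, ``$\gamma$ does not contain any unbound domains, therefore by stability neither does $\containsSeed$.'' The second sentence is then immediate from the height subscripts you already identified: the vertical domains at the row-$i$/row-$(i+1)$ interface can bind only between monomers of those two rows, and both partners of any bond must lie inside the connected component $\containsSeed$, so a size mismatch between two rows of $\containsSeed$ would force one of those domains to be unbound. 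No repartitioning, no competing configuration, no entropy comparison.

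Your plan instead tries to contradict \emph{stability} by exhibiting a saturated configuration with strictly more polymers than $C$. This is both unnecessary and, as written, incomplete. The ``rebinding'' step---moving all column-$j$ monomers into one polymer---is not specified: you do not say what becomes of the horizontal bonds currently joining those monomers to their neighbours in $\containsSeed$ and $\containsSeed'$, why the resulting configuration stays saturated, or why it has strictly more components than $C$ (about whose polymer count you know nothing a priori). The anticipated case analysis on carry tiles and the final jump to $\containsSeed=\gamma$ belong to the later claims in the proof of Lemma~\ref{lem:gammaIsStable}, not to this one.
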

\begin{proof}
$\gamma$ does not contain any unbound domains, therefore by stability neither does $\containsSeed$. 
If two rows of $\containsSeed$ each had a different number of monomers then this would imply that some domain of $\containsSeed$ is unbound, which is impossible.   
 \end{proof}

The next observation also follows from the fact that $\containsSeed$ cannot contain any unbound domains.
\begin{claim} \label{obs:end-cols}
  The polymer $\containsSeed$ contains the seed column and end column. 
\end{claim}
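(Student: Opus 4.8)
The plan is to argue that the seed column and end column are forced to lie in $\containsSeed$ by a saturation argument, exactly paralleling Claim~\ref{obs:no-partials}. The key point is that the seed monomer (the bottom-left tile) has domains that are \emph{unique} to the seed column — by construction, the vertical glues within the leftmost column and the westernmost glue on the seed tile do not appear as codomains anywhere outside the seed column, and similarly the ``hard-coded'' tiles of the first column carry domains whose complements appear only on their immediate neighbors in that column. So first I would trace the chain of forced bonds starting from the seed monomer: each domain of the seed monomer must be bound (since $\containsSeed$ inherits saturation from the stability of $C$ and the fact that $\gamma$, hence $C$, has no unbound domains), and the only monomer carrying the matching codomain is the next monomer up in the seed column. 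Iterating up the column, every monomer of the seed column is pulled into $\containsSeed$.

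Next I would handle the end column by a symmetric argument from the other side: the terminator/``end'' tiles carry the red strength-2 glue (and its neighbors) whose complements appear only within the end column, so once any one end-column monomer is shown to be in $\containsSeed$ the whole column follows. To seed that, I would invoke Claim~\ref{obs:no-partials}: since all rows of $\containsSeed$ have the same number of monomers, and $\containsSeed$ already contains the full seed column (one monomer in each of the four rows), each row of $\containsSeed$ has at least the width it would have in $\gamma$ if it reaches the end column — more carefully, I'd argue that $\containsSeed$ must span horizontally because the rightmost domain of each monomer it contains must be bound to the monomer to its right, and the only monomers supplying those codomains are the next column over (the height subscripts $i$, $i-1$ rule out any ``long-range'' rebinding to a different row). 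Pushing this rightward across each row, and using that the rows stay equal width, forces $\containsSeed$ to extend all the way to, and include, the end column.

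The main obstacle I anticipate is making precise the claim that binding is \emph{locally forced} — i.e., that a domain on a given monomer can only be satisfied by the codomain on its geometric neighbor in $\gamma$, and not by some distant monomer of the same type elsewhere in the counter. This is where the height subscripts introduced in Fig.~\ref{fig:aTAM-counter}(c) do the real work: a domain subscripted $i$ has its codomain only on monomers in rows $i$ or $i{+}1$, which drastically limits rebinding options, and within a row the horizontal glue labels are positionally specific enough (carry $c$ vs.\ no-carry, plus the bit value) that the only codomain available is on the unique neighbor. I would state this as an auxiliary claim — something like ``for every monomer $\vm$ in $\containsSeed$ and every domain $d$ of $\vm$, the codomain $d^*$ to which $d$ is bound in $C$ lies on the monomer occupying the geometrically adjacent position in $\gamma$'' — prove it by inspecting the (finite, explicitly listed) domain alphabet of the construction, and then the column-completeness of $\containsSeed$ follows by induction along rows and columns. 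With both columns in hand, Claim~\ref{obs:no-partials} then forces every intermediate column to be complete as well, so $\containsSeed = \gamma$, giving $S(C) = 1$ and establishing Lemma~\ref{lem:gammaIsStable}.
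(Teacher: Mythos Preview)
Your seed-column argument is fine and matches the paper: the vertical domains in the leftmost column are unique, so saturation drags the whole column into $\containsSeed$ once the seed monomer is there.

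The end-column argument has a real gap. Your proposed auxiliary claim --- that every domain in $\containsSeed$ must be bound to the codomain on the monomer occupying the \emph{geometrically adjacent} position in $\gamma$ --- is false, and this is precisely where the argument would break. The configuration $C$ is an arbitrary stable configuration of the same monomer multiset; its binding structure need not resemble $\gamma$'s at all. Crucially, within a row there are \emph{many} monomers of the same type (e.g., several identical $n^i$-monomers, several identical $c^i$-monomers), so a horizontal domain on one of them has its complement on several different monomers, not just ``the'' neighbor. The height subscripts pin a domain to a row, but give no positional specificity \emph{within} a row. So you cannot ``push rightward'' monomer by monomer, because at each step the codomain could live on any of several interchangeable monomers, and nothing forces you to reach the end column this way.

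The paper sidesteps this entirely with a counting argument on the bottom row: the seed's bottom monomer contributes one starred horizontal domain, every interior bottom-row monomer contributes exactly one starred and one unstarred horizontal domain, so if the end column's bottom monomer were absent the starred/unstarred counts would be off by one and some domain in $\containsSeed$ would be unbound. This argument is indifferent to \emph{which} monomer binds to which --- it only counts types --- and that is exactly what is needed when $C$ is arbitrary. Once the bottom end monomer is in $\containsSeed$, uniqueness of the end-column vertical domains pulls in the rest of that column, just as for the seed column.

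Finally, note that you are overreaching: this claim is only one step toward Lemma~\ref{lem:gammaIsStable}, not the lemma itself. The subsequent Claims~\ref{obs:cequal}--\ref{obs:c1} (all counting arguments of the same flavor) are what force $\containsSeed$ to contain every column.
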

\begin{proof}
First, note that since $\containsSeed$ contains the seed monomer by assumption, it must contain the entire seed column; otherwise some domain would be unbound because the domains in the seed column are unique and do not appear on any other monomers.   
To see that $\containsSeed$ must contain the end column we examine the bottom row of $\containsSeed$.  We have already established that $\containsSeed$ must contain the seed column.  Now, observe that the bottom monomer of the seed row contains a horizontal starred domain and all the inner monomers (the monomers which do not appear in the seed or end columns) contain exactly one starred horizontal domain and one unstarred horizontal domain.  Thus, if the bottom monomer in the end column was not in $\containsSeed$, there would be one extra unstarred horizontal domain in the bottom row of $\containsSeed$ which means $\containsSeed$ would contain unbound domains. And, for the same reason $\containsSeed$ must contain the entire seed column, it must contain the entire end column.
\end{proof}

Since $\containsSeed$ contains the seed and end columns and the top row of $\containsSeed$ cannot contain any unbound domains, we have the following:

\begin{observation} \label{obs:c4}
  The polymer $\containsSeed$ contains both $c^4$ monomers.  
\end{observation}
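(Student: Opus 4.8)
The plan is to pin down the two $c^4$-monomers of $\gamma$ explicitly, observe that one of them lies in the end column (hence in $\containsSeed$ by Claim~\ref{obs:end-cols}), and force the other one into $\containsSeed$ by a counting argument on the top row. Throughout I will use only the facts already available: $\containsSeed$ has no unbound domains (by stability, since $\gamma$ has none), its rows are all of the same size (Claim~\ref{obs:no-partials}), and it contains the seed column and the end column (Claim~\ref{obs:end-cols}).

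Recall rows are numbered from the bottom, so row~$4$ is the top row and carries the most significant bit; order the columns left-to-right from the seed column to the end column. A carry reaches row~$4$ in exactly two columns: the column where the most significant bit flips from $0$ to $1$, whose row-$4$ monomer I call $\mu_{\mathrm{flip}}$, and the column where the carry arrives at a most significant bit already equal to $1$, which is the halting column triggered by the red strength-$2$ glue, and whose row-$4$ monomer I call $\mu_{\mathrm{end}}$. Each of $\mu_{\mathrm{flip}}$ and $\mu_{\mathrm{end}}$ carries a $c_3^*$ domain and no other $c$- or $c^*$-labelled domain, so these are precisely the two $c^4$-monomers (both of them $c^4_{\mathrm{n}}$-monomers). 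Their horizontal ``value'' domains differ: writing $v^0$ (respectively $v^1$) for the domain label shared by the row-$4$ monomers of consecutive columns whose most significant bit is $0$ (respectively $1$), the monomer $\mu_{\mathrm{flip}}$ carries a $v^0$ co-domain on its west side and a $v^1$ primary domain on its east side, whereas $\mu_{\mathrm{end}}$ carries a $v^1$ co-domain on its west side and no value domain on its east side.

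Since $\mu_{\mathrm{end}}$ lies in the end column, Claim~\ref{obs:end-cols} gives $\mu_{\mathrm{end}}\in\containsSeed$ at once. For $\mu_{\mathrm{flip}}$ I would balance the label $v^0$ across the top row of $\containsSeed$. In $\gamma$, the monomers carrying a $v^0$-labelled domain are of three kinds: the row-$4$ monomer of the seed column, which has one $v^0$ primary domain and, being built hard-coded from the seed with no west neighbour, no $v^0$ co-domain; the monomer $\mu_{\mathrm{flip}}$, which has one $v^0$ co-domain and no $v^0$ primary domain; and several ``interior'' row-$4$ monomers, each of which has exactly one $v^0$ primary domain and exactly one $v^0$ co-domain. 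Because $\containsSeed$ has no unbound domains, the number of $v^0$ primary domains on monomers of $\containsSeed$ equals the number of $v^0$ co-domains on monomers of $\containsSeed$. By Claim~\ref{obs:end-cols} the seed column, and hence its row-$4$ monomer, is in $\containsSeed$, contributing a $v^0$ primary with no $v^0$ co-domain; every other $v^0$-bearing monomer except $\mu_{\mathrm{flip}}$ contributes equal numbers of $v^0$ primary domains and $v^0$ co-domains; so $\containsSeed$ must also contain $\mu_{\mathrm{flip}}$ to restore the balance. (Symmetrically, one may instead balance $v^1$, anchored at $\mu_{\mathrm{end}}\in\containsSeed$, and reach the same conclusion.) Hence $\containsSeed$ contains both $c^4$-monomers.

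The reason this is not immediate from Claim~\ref{obs:end-cols}, and the place where care is needed, is that the domains of $\gamma$ are indexed only by height and not by horizontal position, so many distinct row-$4$ monomers carry the same labels $v^0$ and $v^1$ and one cannot argue that $\containsSeed$ occupies a contiguous block of columns. The substantive step is the parity bookkeeping: checking that within $\gamma$ the only row-$4$ monomers whose counts of a given value label fail to balance are the ``column endpoints'' (the seed-column monomer and $\mu_{\mathrm{flip}}$ for the label $v^0$; $\mu_{\mathrm{flip}}$ and $\mu_{\mathrm{end}}$ for $v^1$), so that having one endpoint of such a label in $\containsSeed$ forces in the other.
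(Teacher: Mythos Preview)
Your overall strategy—anchor at the seed and end columns and push a domain–balance along the top row—is exactly what the paper's one–sentence justification (``since $\containsSeed$ contains the seed and end columns and the top row of $\containsSeed$ cannot contain any unbound domains'') is gesturing at, and your $v^0$ balance forcing $\mu_{\mathrm{flip}}$ into $\containsSeed$ is the right kind of argument. (A cosmetic point: you have the starred/unstarred convention reversed relative to the paper—east sides are starred and west sides unstarred, as can be read off the proof of Claim~\ref{obs:end-cols}—but the balance goes through with the roles swapped.)

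The genuine gap is the claim that $\mu_{\mathrm{end}}$ lies in the end column. It does not. The column in which the carry meets a most–significant bit already equal to $1$ is the inner carry column that \emph{exposes} the red strength-$2$ glue on its east side; the end column is the \emph{next} column, which merely attaches to that glue and has no output glues. The end column's row-$4$ monomer carries no subscripted $c$ or $c^*$ domain and is therefore not a $c^4$-monomer, so Claim~\ref{obs:end-cols} does not hand you $\mu_{\mathrm{end}}$ for free. The repair stays entirely within your own framework: the end column's row-$4$ monomer carries the unique unstarred ``red'' domain, and the only monomer in $\gamma$ carrying its starred partner is $\mu_{\mathrm{end}}$, so one balance on the red label (anchored at the end column, which you already have) forces $\mu_{\mathrm{end}}\in\containsSeed$. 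Equivalently, once your $v^0$ balance has placed $\mu_{\mathrm{flip}}$ in $\containsSeed$, the $v^1$ balance anchored at $\mu_{\mathrm{flip}}$ (not at $\mu_{\mathrm{end}}$, as you wrote) forces $\mu_{\mathrm{end}}$. Either way, two chained balances are needed rather than one balance plus a direct membership claim.
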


For $i \in \{1, 2, 3, 4\}$, we denote the number of $c^i$ monomers in $\containsSeed$ by $|c^i|$.  Likewise, we denote the number of $c^i_\mathrm{n}$ and $c^i_\mathrm{s}$ monomers in $\containsSeed$ by $|c^i_\mathrm{n}|$ and $|c^i_\mathrm{s}|$ respectively.

\begin{claim} \label{obs:cequal}
  Let $i \in \{2, 3\}$.  Then $|c^i_\mathrm{n}| = |c^i_\mathrm{s}|$.
\end{claim}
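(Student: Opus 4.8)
The plan is to count domains of the subscripted "$c$" type that live in rows $i$ and $i-1$ of $\containsSeed$ for $i\in\{2,3\}$, and show that saturation forces a balance between the number of $c^i_{\mathrm n}$ monomers (those carrying a $c_{i-1}^*$ domain) and the $c^i_{\mathrm s}$ monomers. First I would recall, from the construction in Fig.~\ref{fig:aTAM-counter}(c), exactly which monomers in rows $i-1$ and $i$ carry a $c_{i-1}$ domain and which carry a $c_{i-1}^*$ codomain. By inspection of the counter tiles, the primary domain $c_{i-1}$ appears as the north-facing glue on each $c^{i-1}$ monomer in row $i-1$, and the complementary codomain $c_{i-1}^*$ appears as the south-facing glue precisely on the $c^i_{\mathrm n}$ monomers in row $i$ (a $c^i_{\mathrm s}$ monomer, by definition, lacks $c_{i-1}^*$). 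Since $c_{i-1}$ and $c_{i-1}^*$ are unique to this pair of rows (the glue labels encode the height $i-1$), saturation of $\containsSeed$ means the number of $c_{i-1}$ domains equals the number of $c_{i-1}^*$ codomains in $\containsSeed$.

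Next I would translate both counts into the quantities $|c^i_{\mathrm n}|$, $|c^i_{\mathrm s}|$ and the corresponding row-$(i-1)$ populations. The count of $c_{i-1}^*$ codomains is exactly $|c^i_{\mathrm n}|$ (one per $c^i_{\mathrm n}$ monomer). The count of $c_{i-1}$ domains is the number of $c^{i-1}$ monomers in row $i-1$, which by the same reasoning applied one level down equals $|c^{i-1}_{\mathrm n}| + |c^{i-1}_{\mathrm s}|$, and more usefully is tied to $|c^i|=|c^i_{\mathrm n}|+|c^i_{\mathrm s}|$ because Claim~\ref{obs:no-partials} guarantees rows $i-1$ and $i$ of $\containsSeed$ have the same number of monomers and the $c$-carrying monomers occupy matching "carry region" positions across adjacent rows. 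Combining these equalities yields $|c^i_{\mathrm n}| = |c^i_{\mathrm n}| + |c^i_{\mathrm s}| - (\text{something})$, and after bookkeeping of the one fixed boundary monomer per carry column (the $c^i_{\mathrm s}$ that sits at the start of a fresh carry), the balance collapses to $|c^i_{\mathrm n}| = |c^i_{\mathrm s}|$.

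The cleanest route, which I would actually write, is: let $x = |c^i_{\mathrm n}|$ and $y = |c^i_{\mathrm s}|$; count the $c_{i-1}$-type bonds that cross between row $i-1$ and row $i$ within $\containsSeed$. Every such bond joins a $c_{i-1}$ on a row-$(i-1)$ monomer to a $c_{i-1}^*$ on a $c^i_{\mathrm n}$ monomer, so there are exactly $x$ of them, hence exactly $x$ row-$(i-1)$ monomers carry the relevant $c_{i-1}$ domain. On the other hand, by the structure of the counter each row-$(i-1)$ "carry" monomer carries a $c_{i-1}$ output exactly when it is followed (to its east, i.e. in row $i$) by an $\mathrm n$-type continuation, and the $\mathrm s$-type monomers in row $i$ are exactly those starting a new carry, which pair with the row-$(i-1)$ monomers that do \emph{not} emit $c_{i-1}$; counting row $i-1$ two ways (via Claim~\ref{obs:no-partials} it has the same cardinality as row $i$, namely $x+y$ in the carry region plus the fixed seed/end contributions which cancel) gives $x = y$.

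The main obstacle I anticipate is pinning down the precise correspondence between "carry region" monomers in adjacent rows cleanly enough that the boundary terms (the seed column monomer and end column monomer in each row, and the single $c^i_{\mathrm s}$ that initiates a carry) genuinely cancel rather than leaving an off-by-one discrepancy. This is really a careful read of the tile set in Fig.~\ref{fig:aTAM-counter}(c): I must verify that in the target polymer $\gamma$ the north glue $c_{i-1}$ is present on a row-$(i-1)$ monomer if and only if that monomer's east neighbor in row... — actually, if and only if the bit being represented propagates a carry — and that these are in bijection with the $c^i_{\mathrm n}$ monomers. Once that bijection is nailed down, the equality $|c^i_{\mathrm n}| = |c^i_{\mathrm s}|$ is immediate from saturation (no unbound $c_{i-1}$ or $c_{i-1}^*$, by stability of $\containsSeed$ and the argument of Claim~\ref{obs:no-partials}).
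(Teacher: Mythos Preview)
Your approach is genuinely different from the paper's, and as written it has a real gap.

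The paper does \emph{not} count the vertical $c_{i-1}/c_{i-1}^*$ bonds between rows $i-1$ and $i$. Instead it counts the \emph{horizontal} subscripted-$0$ domains entirely within row~$i$: among row-$i$ monomers, the $n^i$-monomers each carry a balanced pair of subscripted $0$ and $0^*$, the seed- and end-column monomers together are balanced, every $c^i_{\mathrm s}$ contributes one extra subscripted $0$, and every $c^i_{\mathrm n}$ contributes one extra subscripted $0^*$. Since $\containsSeed$ has no unbound domains, the $0$ and $0^*$ counts in row $i$ agree, forcing $|c^i_{\mathrm n}|=|c^i_{\mathrm s}|$. This argument never touches row $i-1$.

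Your vertical-bond count gives $|c^i_{\mathrm n}| = (\text{number of row-}(i{-}1)\text{ monomers carrying }c_{i-1}\text{ on the north})$, which is correct, but the step from there to $|c^i_{\mathrm n}|=|c^i_{\mathrm s}|$ is where things break down. You assert that ``the $c$-carrying monomers occupy matching carry-region positions across adjacent rows'' and that $c^i_{\mathrm s}$ monomers ``pair with'' row-$(i-1)$ monomers not emitting $c_{i-1}$, but $\containsSeed$ is an \emph{arbitrary} stable polymer containing the seed --- there is no lattice geometry, no east/west adjacency, and no bijection between rows beyond the bare cardinality equality of Claim~\ref{obs:no-partials}. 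Equal row sizes do not by themselves constrain how the $c$-type versus $n$-type monomers are distributed within row $i-1$. To pin that down you would need structural information about row $i-1$ that is exactly what Claim~\ref{obs:cequal} (at level $i-1$) or Claim~\ref{obs:cdouble} supplies --- but Claim~\ref{obs:cdouble} is proved \emph{using} Claim~\ref{obs:cequal}, so you would be arguing in a circle. The horizontal-domain count sidesteps all of this by staying inside a single row.
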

\begin{proof}
To see this, first note that the number of subscripted $0$ domains must equal the number of subscripted $0^*$ domains in row $i$  so that no domains are left unbound.  Denote the number of monomers which contain a subscripted $n$ domain in row $i$ by $|n^i|$. 
Observe that monomers on row $i$ that have both subscripted $n^*$ and  $n$ domains each have an equal number of subscripted $0$ and $0^*$ domains. Also, notice that for each $i$  the two monomers on row $i$ of the seed and end columns together (i.e.\ unioned) have an equal number of subscripted $0$ and $0^*$ domains.  Now observe that $c^i_s$ monomers have an extra subscripted $0$ domain and $c^i_\mathrm{n}$-monomers have an extra subscripted $0^*$ domain.  Also, we note that a row of monomers in row $i$ is made up completely of monomers from the seed and end columns, $c^i_\mathrm{n}$-monomers, $c^i_s$-monomers, and monomers that each contain a pair of subscripted $n, n^\ast$ domains.  Hence, in order for row $i$ to have an equal number of subscripted $0$ and $0^*$ domains, the number of $c^i_\mathrm{n}$ and $c^i_s$ monomers must be equal. \end{proof}

\newpage
\begin{claim} \label{obs:cdouble}
For $i > 2$, $|c^i| = \frac{|c^{i-1}|}{2}$ and $|c^2|=|c^1|$.
\end{claim}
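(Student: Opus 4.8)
The goal is to relate the column counts across rows of the seed-containing polymer $\containsSeed$ for the binary counter of Fig.~\ref{fig:aTAM-counter}(c). By Claim~\ref{obs:no-partials} every row of $\containsSeed$ has the same number of monomers, so $|c^i|$ together with the seed/end columns and the $n^i$-monomers partition that common row size. The plan is to count a single subscripted domain type per row and use the fact that $\containsSeed$ has no unbound domains (by stability, since $\gamma$ has none), exactly as in the proof of Claim~\ref{obs:cequal}, but now tracking how bits propagate \emph{vertically} between row $i-1$ and row $i$.

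\textbf{Key steps.}
First I would fix $i>2$ and examine the vertical (between-row) domains connecting row $i-1$ to row $i$: each monomer in row $i-1$ exposes one ``up'' domain subscripted $i-1$ and each monomer in row $i$ exposes one matching ``down'' codomain subscripted $i-1$, so saturation forces these to pair up one-to-one, confirming the equal-row-size fact from a different angle. Next, I would identify which monomers in row $i-1$ carry a carry glue of the right flavor that becomes an input to a $c^i$-monomer above: a $c^i$-monomer is by definition the one whose bottom bit gets flipped, and it sits atop a monomer of row $i-1$ that exposes the carry domain $c_{i-1}$; the point is that along each column exactly one height transitions the carry, and since the carry at height $i$ is generated precisely when the bits at heights $1,\dots,i-1$ are all $1$, which happens once every $2^{i-1}$ columns — while $c^{i-1}$-monomers appear once every $2^{i-2}$ columns — we get $|c^i| = |c^{i-1}|/2$. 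For the base case $|c^2| = |c^1|$, I would argue that the least significant bit is incremented in \emph{every} new column (it flips on every count), so both row $1$ and row $2$ see a carry/increment event in each non-seed column, giving equality; alternatively, one counts the subscripted-$1$ domains between rows $1$ and $2$ directly and uses Claim~\ref{obs:cequal} applied at $i=2$ to convert the $c^1$ count into the $c^2$ count. Throughout, I would phrase everything in terms of matching subscripted domain/codomain pairs so that ``no unbound domains in $\containsSeed$'' does all the combinatorial work, avoiding any appeal to geometry.

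\textbf{Main obstacle.}
The delicate part is making the carry-propagation argument rigorous \emph{without} invoking the lattice geometry of the aTAM — in $\containsSeed$ we only know monomer multiplicities and that all domains are bound, not that monomers are arranged in a grid. So the real content is to show that the subscripted domain types were designed so that the count of $c^i$-monomers is forced to be half the count of $c^{i-1}$-monomers purely by a domain-balance equation on rows $i-1$ and $i$, analogous to how Claim~\ref{obs:cequal} forced $|c^i_{\mathrm n}| = |c^i_{\mathrm s}|$. Concretely, I expect to write a balance identity of the form (number of ``carry-out'' domains exposed upward by row $i-1$) $=$ (number of $c^i$-monomers), combined with (carry-out domains of row $i-1$) $=$ (half the $c^{i-1}$-monomers, since a carry is generated by exactly the $1$-valued among the bit-flip events of row $i-1$, and $c^{i-1}_{\mathrm s}$ vs.\ $c^{i-1}_{\mathrm n}$ split them evenly by Claim~\ref{obs:cequal}). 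Pinning down that this split is exactly a factor of two — and that the $i=2$ case degenerates to equality because there is no lower carry to gate on — is where the care is needed; the rest is bookkeeping with subscripted $0/0^*$ and $n/n^*$ domains of the kind already carried out in Claim~\ref{obs:cequal}.
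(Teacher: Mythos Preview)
Your ``Main obstacle'' paragraph lands on exactly the paper's argument: for $i>2$, a single vertical carry-domain type is carried by precisely the $c^{i-1}_{\mathrm n}$-monomers on one side and by precisely the $c^i$-monomers on the other, so saturation of $\containsSeed$ forces $|c^i|=|c^{i-1}_{\mathrm n}|$; Claim~\ref{obs:cequal} at level $i-1$ then gives $|c^{i-1}_{\mathrm n}|=|c^{i-1}|/2$. That is the entire proof for $i>2$ in the paper --- a purely syntactic domain-count, with no reference to bit values or column positions.

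Two points to tighten. First, the carry-frequency reasoning in your ``Key steps'' (``once every $2^{i-1}$ columns'', ``along each column exactly one height transitions the carry'') presupposes the grid layout of $\gamma$, which is precisely what Lemma~\ref{lem:gammaIsStable} is trying to establish about $\containsSeed$; you already flag this yourself, so simply drop it and lead with the balance identity. Second, your handling of $|c^2|=|c^1|$ does not work as written: invoking Claim~\ref{obs:cequal} at $i=2$ only yields $|c^2_{\mathrm n}|=|c^2_{\mathrm s}|$, a fact internal to row~$2$ that says nothing about row~$1$. The paper's argument here is simpler and does not use Claim~\ref{obs:cequal} at all: \emph{every} $c^1$-monomer (not merely half) carries the carry codomain $c^*_0$, its complement $c_0$ appears only on $c^2$-monomers and on all of them, so saturation gives $|c^1|=|c^2|$ directly. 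The asymmetry with the $i>2$ case is exactly that row~$1$ has no $\mathrm n/\mathrm s$ split --- all $c^1$-monomers are of a single kind.
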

\begin{proof}
In order for all domains in $\containsSeed$ to be bound, row $i$ must contain just enough $c$-monomers to bind with all the $c^{i-1}_\mathrm{n}$-monomers.  It immediately follows from this that $|c^2|=|c^1|$ since the $c$-monomers in row $1$  each contain the $c^*_0$ domain which in turn is bound to some $c_0$ domain which in turn are only found on row~2 in $c^2$-monomers. 

 Let $i > 2$. Claim~\ref{obs:cequal} states that the number of $c^{i-1}_\mathrm{n}$-monomers and $c^{i-1}_s$-monomers is equal. 
 Hence exactly half of the $c^{i-1}$ monomers, in fact the set of $c^{i-1}_\mathrm{n}$-monomers, 
 contain the $c^*_{i-2}$ domain that that can bind to a $c^i$ monomer.
 Also,  all $c^i$ monomers have a $c_{i-2}$ domain that binds to a $c^{i-1}$ monomer.
   Thus, the number of  $c$-monomers in row $i$ is exactly half the number $c$-monomers in row $i-1$.  In other words,  row $i$  contains exactly $\frac{|c^{i-1}|}{2}$ of $c^i$-monomers when $i > 2$.
\end{proof}

The next claim allows us to determine the number of columns contained in $\containsSeed$ based on the number of $c^1$-monomers contained in~$\containsSeed$.\footnote{Although we are not yet saying whether the number of $c^1$-monomers in $\containsSeed$ is the same as the number of $c^1$-monomers in $\gamma$}.

\begin{claim} \label{obs:c1}
The number of columns in $\containsSeed$ is $2+2|c^1|-1$.
\end{claim}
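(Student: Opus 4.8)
The plan is to count the columns of $\containsSeed$ by counting the number of "new column" events, which are exactly mediated by the subscripted $c$-domains. By Claim~\ref{obs:end-cols}, $\containsSeed$ contains both the seed column and the end column; by Claim~\ref{obs:no-partials} every column it contains is complete, so it suffices to count the inner columns and add $2$. Each inner column of the counter is initiated either at the bottom (a "carry-from-bottom" column, whose bottom monomer carries a $c$-domain into row~$1$) or from some higher row (a "copy" column). In the aTAM counter, consecutive columns alternate between these two types, and each inner column is associated with exactly one strength-2 "new column" glue; in the TBN interpretation each such glue corresponds to a $c$-domain binding event in some row $i$. So the number of inner columns equals the number of $c$-monomers in $\containsSeed$ of a certain kind, and the plan is to express this in terms of $|c^1|$.

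First I would establish that the number of inner columns equals $\sum_{i=1}^{4} |c^i_{\mathrm n}|$ — or more precisely, that each $c^i_{\mathrm n}$-monomer corresponds to exactly one column that "carries" out of row $i$, and by stability the $c$-domains propagating out of row $i$ must be absorbed by $c$-monomers in row $i+1$, except for row $4$ whose carries are absorbed in the end column. Then I would invoke the arithmetic already proven: by Claim~\ref{obs:cdouble}, $|c^2| = |c^1|$ and $|c^i| = |c^{i-1}|/2$ for $i>2$, so $|c^1| = |c^2|$, $|c^3| = |c^1|/2$, $|c^4| = |c^1|/4$; and by Claim~\ref{obs:cequal}, $|c^i_{\mathrm n}| = |c^i_{\mathrm s}| = |c^i|/2$ for $i \in \{2,3\}$. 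Combining, the total number of inner columns should telescope. I expect the count of inner columns to work out to $2|c^1|-1$: informally, $|c^1|$ columns are initiated in row $1$, then these merge in pairs going up the tree of carries (Claim~\ref{obs:cdouble} is exactly the statement that the $c$-monomer counts halve), so the total number of distinct columns touched is $|c^1| + |c^1| + |c^1|/2 + \dots$, but counted correctly (each column initiated exactly once) it is $2|c^1| - 1$. Adding the seed and end columns gives $2 + 2|c^1| - 1$.

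The main obstacle will be making the bijection between columns and $c$-domain binding events airtight without re-deriving the whole structure of the counter: I need to argue from \emph{stability alone} (no unbound domains, combined with the monomer types available) that $\containsSeed$'s columns really are organized into this carry-tree, rather than assuming it from the aTAM growth dynamics. The cleanest route is probably to observe that every inner monomer has exactly one starred and one unstarred horizontal domain (as used in Claim~\ref{obs:end-cols}), so horizontal bonds chain the monomers of each row into a path from the seed column to the end column; the number of such monomers in a row then directly gives the number of columns, and the row counts $|c^i|$ plus the seed/end contributions pin this down. I would then just check the base row: row~$1$ of $\containsSeed$ consists of the seed-column bottom monomer, the end-column bottom monomer, the $|c^1|$ $c$-monomers of row~$1$, and some number of "copy" monomers carrying a subscripted $n,n^*$ pair; counting horizontal domains as in Claim~\ref{obs:end-cols} forces the total number of monomers in row~$1$, hence the number of columns, to be $2 + 2|c^1| - 1$.
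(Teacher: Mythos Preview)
Your final ``cleanest route'' is exactly the paper's approach: count the monomers in row~1 and invoke Claim~\ref{obs:no-partials} to lift that to a column count. The paper's proof never touches Claims~\ref{obs:cequal} or~\ref{obs:cdouble}; it simply observes (via Claim~\ref{obs:end-cols}) that the seed and end columns are present, that each $c^1$-monomer contributes a full column, and that between consecutive $c^1$-monomers in row~1 there must sit exactly one $n^1$-monomer, giving $|c^1|-1$ additional columns and hence $2 + |c^1| + (|c^1|-1)$ in total. So your elaborate first plan---summing $|c^i_{\mathrm n}|$ over all rows and telescoping via the halving relation---is a detour you correctly abandon.

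One point where your write-up is looser than the paper's: you say ``counting horizontal domains as in Claim~\ref{obs:end-cols} forces the total number of monomers in row~1''. The starred-versus-unstarred count used in Claim~\ref{obs:end-cols} does \emph{not} by itself determine $|n^1|$, since every inner monomer (whether $c^1$ or $n^1$) contributes one starred and one unstarred horizontal domain and these balance for any value of $|n^1|$. What actually pins down $|n^1| = |c^1|-1$ is the finer fact that the horizontal domain \emph{types} on $c^1$-monomers are complementary only to those on $n^1$-monomers (and to the seed/end monomers), forcing strict alternation along row~1. The paper states this as ``there must be an $n^1$-monomer between each $c^1$-monomer''; your domain-counting idea works only if you count per domain type, not just per starred/unstarred parity.
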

\begin{proof}
To see why Claim~\ref{obs:c1} holds, note that  Claim~\ref{obs:end-cols} tells us that $\containsSeed$ contains both the seed and end columns.  In addition, it follows from Claim~\ref{obs:no-partials} that for each $c^1$-monomer in $\containsSeed$, $\containsSeed$ must contain a full column.  In order for $\containsSeed$ not to have any exposed domains there must be an $n^1$-monomer between each $c^1$-monomer.  Once again, applying Claim~\ref{obs:no-partials}, for each of these $n^1$-monomers, $\containsSeed$ must contain a full column.  Consequently, it follows that $\containsSeed$ contains $2+2|c^1|-1$ columns.
\end{proof}

We are now ready to conclude this section with the proof of its main claim, Lemma~\ref{lem:gammaIsStable}. 
\begin{proof}[Proof of Lemma~\ref{lem:gammaIsStable}]
It follows from Observation~\ref{obs:c4} that $\containsSeed$  contains both $c^4$-monomers.  Claim~\ref{obs:cdouble} then implies that row $3$ has $4$ $c^3$-monomers.  Applying Claim~\ref{obs:cdouble} again, we have that $\containsSeed$  contains $8$ $c^2$-monomers.  By the kind of reasoning used in the proof of Claim~\ref{obs:cdouble} once more, we have that $\containsSeed$ contains $8$ $c^1$-monomers.  Using Claim~\ref{obs:c1}, we have that $\containsSeed$  contains $17$ columns and thus contains all the monomers of $\gamma$ (shown in Fig.~\ref{fig:aTAM-counter}(c)).  Consequently, any stable configuration has exactly one polymer, and since $\gamma$ is saturated, it is  stable.
\end{proof}

\end{document}